\documentclass[11pt,a4paper]{article}
\usepackage[margin=2.5cm]{geometry}
\usepackage[colorlinks,linkcolor=blue,citecolor=blue,urlcolor=blue]{hyperref}
\usepackage{fancyhdr,lastpage,verbatim,rotating,amsfonts,color}
\usepackage{setspace}
  \usepackage[USenglish]{babel}
  \usepackage{amsmath,amssymb,amsthm}
  
\usepackage{tikz}\usetikzlibrary{patterns}
\usepackage{enumitem}
\usepackage{caption}
\usepackage{subcaption}

\usepackage[numbers]{natbib}
\usepackage{breakcites}


\newcommand{\R}{\mathbb{R}}
\newcommand{\G}{\mathcal{G}}

\newcommand{\HA}{\mathcal{H}}

\newcommand{\E}[2]{\mathbb{E}_{#1}\left[#2\right]}

\newcommand{\Exp}[1]{\mathbb{E}\left[#1\right]}
\newcommand{\bx}{\bar{x}}

\newcommand{\ut}{u}

\newcommand{\argmin}{\operatornamewithlimits{argmin}}

    \usepackage{hyperref}
\newtheorem{assumption}{Assumption}
\pagestyle{plain}

\newtheorem{theorem}{Theorem}[section]
\newtheorem{lemma}{Lemma}[section]
\newtheorem{corollary}{Corollary}[section]

\newtheorem{proposition}{Proposition}[section]
\newtheorem{definition}{Definition}[section]

\newtheorem{remark}{Remark}[section]

\title{Robust Quantitative Comparative Statics\\ for a Multimarket Paradox}

\author{
Tobias Harks\thanks{
University of Augsburg, Institute of Mathematics, 86135 Augsburg, Germany
Email: \texttt{tobias.harks@math.uni-augsburg.de}}
\and
Philipp von Falkenhausen\thanks{Technical University Berlin, Institute of Mathematics, 10623 Berlin, Germany Email: \texttt{falkenhausen@math.tu-berlin.de}}
}


\begin{document}

\maketitle
\begin{abstract}%
We introduce a quantitative approach to comparative statics that allows to bound the maximum effect of an exogenous parameter change on a system's equilibrium.
The motivation for this approach is a well known paradox in multimarket Cournot competition,
where a positive price shock on a monopoly market may actually reduce the monopolist's
profit.
We use our approach to quantify for the first time the worst case profit
reduction for multimarket oligopolies
exposed to arbitrary positive price shocks.
For markets with affine price functions and firms with convex cost technologies,
we show that the relative profit loss
of \emph{any} firm is at most $25\%$ no matter
how many firms compete in the oligopoly. We further investigate the impact of positive price shocks
on total profit of all firms as well as on
social welfare. We find tight bounds also for these measures
showing that total profit and social welfare decreases by at
most $25\%$ and $16.6\%$, respectively. 
Finally, we show that in our model, mixed, correlated and coarse
correlated equilibria are essentially unique, thus, all 
our bounds apply to these game solutions as well. 
\end{abstract}


JEL: C72, D43, L13

\section{Introduction}

Many economic settings can be modeled as strategic games $(N,X,u)$, where $N$ denotes a set of players, $X$ is the set of possible states of the games (all possible decisions by all players) and $u:X\rightarrow \R^N$ denotes the players' payoff functions, i.e. in state $x\in X$ player $i$ receives payoff $u_i(x)$. 
Given an instance of such a model, there are parameters that determine the precise values for $N$, $X$, and $u$. 
Comparative statics is concerned with the effect of changes in these parameters on the equilibria of the game. 
Examples of such changing parameters are exposure to international trade~(\citet{krugman1980,melitz2003}) or a forced reduction of produced quantity~(\citet{gaudet1991}) (both affect the state space) or changes of the payoff functions via a demand shift~(\citet{quirmbach1988}), a cost shift~(\citet{fevrier2004}), or the introduction of export taxes/subsidies~(\citet{brander1985,eaton1986}).
A classical approach to comparative statics analysis uses the implicit function theorem\footnote{The application of
 the implicit function theorem for instance in oligopoly models
 requires some regularity assumptions
 such as convexity and smoothness of cost and inverse demand functions, see the discussion in
\citet{MR94}.}(cf.~\citet{Bulow85}). 
This type of analysis, however, is restricted to small (local) parameter changes, and, thus,  does not allow for meaningful conclusions if parameter
changes happens to be discrete. Moreover, it requires knowledge of the precise instantiation of the 
model (in contrast
to knowing only some functional forms of the models, e.g., utility functions are concave).
A more recent approach to comparative statics is based on the
powerful machinery of lattice theory applicable to supermodular games, see 
\citet{Amir96}, 
\citet{Edlin98}, 
\citet{Milgrom94}, 
\citet{Milgrom90,MR94}, 
\citet{Shannon95} and 
\citet{Topkis79, Topkis98}, also  
\citet{Athey02}, 
\citet{Quah07}, and 
\citet{Vives90,Vives05}.
Common to all these works is that they consider the 
monotonicity of effects, i.e., whether a certain objective function increases or 
decreases with a parameter, but they do not quantify the effect.

Our approach is to capture the \textit{maximum possible} effect that the change of a parameter -- shift of the inverse demand function in our case -- can have. 
This worst-case approach exhibits both  
\begin{enumerate}[noitemsep]
    \item significance: are changes in a given parameter worth considering?
    \item robustness: how sensitive is the game to changes of a parameter?
\end{enumerate}
Significance is a crucial motivation of both the analysis of an effect and discussion of whether it can be put to use (\`a la `should a new tax be introduced?'). Robustness on the other hand is important when there is uncertainty about the values of parameters and when parameters change over time. 

\paragraph{A Paradox in Multimarket Cournot Oligopolies.}
We apply our approach to the multimarket oligopoly model introduced by 
\citet*{Bulow85}. They
investigated how ``changes in one market have
ramifications in a second market'' and discovered that a positive price
shock in a firm's monopoly market can have a negative
effect on the firm's profit by influencing competitors' strategies in a different market. This
counterintuitive phenomenon led them to the classification of markets in terms of strategic substitutes and strategic complements.\footnote{See Section~VII in~\citet{Bulow85}.}\footnote{In a market
with \emph{strategic substitutes}, more aggressive play by a firm
leads to less aggressive play of the competitors on that market;
with \emph{strategic complements}, more aggressive play
results in more aggressive play of the competitors.} 
Our paper is about rigorously quantifying profit effects
induced by price shocks in multimarket Cournot oligopolies.

Let us recall the example of two markets $\{1,2\}$ and two firms $\{a,b\}$ given in~\citet{Bulow85}.
Firm $a$ is a monopolist on market $1$ and competes
with firm $b$ on market $2$. Demand is infinitely elastic on market
$1$ for the constant price $p_1=50$. On market $2$, there is an affine price function
given by $p_2(q_{a,2}+q_{b,2})=200-(q_{a,2}+q_{b,2})$, where
$q_{a,2},q_{b,2}$ denote the quantities offered by the respective
firms on market $2$. Production costs are symmetric
and given by $c(q)=1/2q^2$, where $q$ is the total
quantity produced by a firm.\footnote{
\citet{Bulow85} assume
additional fixed cost $F>0$ to prevent firms
from setting up multiple plants. Fixed costs, however, 
do not change the resulting equilibria assuming that
the access to markets is exogenously determined.}
In the Cournot equilibrium, we obtain $q_{a,1}=0$ and $q_{a,2}=q_{b,2}=50$
and each firm earns profits $3750$.

Suppose now that market $1$ experiences a positive price shock
raising its constant price by five units to $55$. The Cournot equilibrium
changes to $q_{a,1}=8$ and $q_{a,2}=47$, $q_{b,2}=51$.
Under this new equilibrium, firm $b$ increases its
profit to $3901.5$ while
firm $1$ obtains after the price shock a profit of $3721.5$.
As noted by 
\citeauthor{Bulow85}, the "positive" price
shock to market $1$ has hurt $a$ even though $a$ sells
more units than before and receives on its monopoly market even at a higher
price. The actual profit reduction for firm $a$ amounts to $0.76\%$. 
A natural question to ask is: how much can a firm
lose from a positive price shock on its monopoly
market?

\paragraph{Robust Quantitative Comparative Statics.}
To address the above question, we propose a robust quantitative approach 
to comparative statics. We aim at quantifying the \textit{maximum possible} effect that the change of a parameter
can have 
on a given set of games.

Following the framework of 
\citet{MR94},
we denote by $\G$ a \emph{class} of models or games describing the 
information set of what the modeler knows about the economic environment. 
Suppose there is an objective function $f:\G\rightarrow \R$ (e.g., welfare of the unique equilibrium outcome) to evaluate these games and denote for any $G\in\G$ by $\Delta_G$ all parameter changes that are to be considered. We express the effect of a parameter change $\delta\in\Delta_G$ on a game $G\in\G$ (assuming for the moment $f(G)\neq 0$) as the value
\[\gamma^f(G,\delta)=\frac{f(G(\delta))}{f(G)},\]
where $G(\delta) \in \G$ denotes the changed game. 
For $f(G)=0$, we set $\gamma^f(G,\delta)=1$, if $f(G(\delta))\geq 0$, 
and  $\gamma^f(G,\delta)=-\infty$, if $f(G(\delta))<0$.
The maximum possible effect across all games in $\G$ and their respective parameter changes $\Delta_\G$
is then defined as
\[\gamma^f(\G,\Delta_\G)=\inf_{G\in\G}\inf_{\delta\in\Delta_G}\gamma^f(G,\delta).\]
Note that if we are interested in the opposite direction of $f$ we can replace the infima with suprema in the above definition.

Quantifying $\gamma^f(\G,\Delta_\G)$ can expose the significance and robustness of the effect of parameter changes, contrasting comparative statics analysis that only reveals the monotonicity, as outlined above. 
Besides a quantitative instead of a qualitative result, 
there are further key differences to prevailing comparative statics approaches:

In practice, the economic model including the endogenous variables and parameters may not be known precisely but only approximately by knowing some functional property  of the relations of endogenous variables or the space of parameters.\footnote{
\citet{MR94}
refer to the set of possible instantiations of the model as a \emph{class} of models, or the \emph{context} ``representing what the modeler knows about the economic environment''.}
The nature of a worst-case analysis such as the proposed robust quantitative comparative statics is that it provides results across a class of games $\G$ described for example by functional properties (e.g. any convex cost technologies in our case) and a set $\Delta_\G$ of parameter changes.
Both $\G$ and $\Delta_\G$ need to be chosen reasonably to yield a meaningful result (any combination of positive price shocks in our case). 
 This is contrasted by the assumption of \emph{perfect knowledge} of all elements of the precise instance $G$
 including the precise parameter change $\delta$
common to most existing approaches in comparative statics, e.g. those based on the implicit function theorem.


\paragraph{Our Results.}
We conduct robust quantitative comparative statics analysis 
for multimarket oligopolies with affine price functions
and convex cost technologies of firms, a class of games that contains the above example. We consider positive price shocks as parameter changes 
and
three different objective functions: the individual profit of a firm, total profit measured by summing up the firm's profits, and social welfare defined by integrating the price functions and subtracting the firm's costs. 
We find that both profit and total profit can be reduced at most $25\%$ by a positive price shock. For profit we give the bound as a function  of the number of firms, showing e.g. that in the two firm case the profit reduction is a most $6.25\%$. Social welfare on the other hand can only be reduced by up to $16.7\%$ by a positive price shock. These results immediately extend to negative price shocks, as any negative price shock can be seen as taking back a positive price shock. The profit and welfare gain from a negative price shock is no more than $33.4\%$, and the social welfare can increase by up to $20\%$.
Our results give the first robust quantitative comparative statics result of an important paradoxical phenomenon
previously only qualitatively analyzed. We further show that for the considered model, mixed, correlated and coarse
correlated equilibria are essentially unique, thus, all 
these worst-case bounds apply to these solution concepts as well. 

Each of our upper bounds is complemented by an instance that actually attains the worst case ratio. We demonstrate that worst case instances are quite generic in the sense that only
a set of rather non-restrictive properties needs to be satisfied. Thus, 
a common criticism of worst case analysis that isolated special cases lead to unrealistic objective function values does not apply to our setting.
If linearity of price functions (or convexity of cost functions) are relaxed, our bounds do not hold anymore. For concave price functions (assuming still convex cost functions), we give an example
in which the profit of a firm is eliminated after a positive price shock. For non-convex
cost functions we sketch a similar example.

\paragraph{Example Application.}
Profit gains from negative price shocks
can occur in international trade, as noted by 
\citet[Sec. VI (C)]{Bulow85}.
Consider two markets located in separate countries with 
convex cost technologies, one of which is a monopoly market for firm $a$. 
A tax change in the country of the monopolist can be considered a price shock.
A government may decide to increase domestic taxes
in order to increase firm $a$'s profitability in the foreign market. 
Our results imply that this positive effect can indeed be significant as it may increase the profitability
by up to $33\%$ of current profits.


\section{Related Literature}\label{sec:related}
Comparative statics has long been used to understand how a system is affected by changes of underlying parameters, see 
\citet{dixit1986} for an overview.

Since 
\citet{Bulow85} introduced the concepts of strategic substitutes and strategic complements, a considerable amount of research has used comparative statics to investigate situations where strategic substitutes or complements occur. Notably, 
\citet{brander1985} found that an export subsidy can increase welfare in Cournot competition with strategic substitutes (although they do not name the concept). 
\citet{eaton1986} extended this model to a two-stage game where first governments set policies and then firms engage in competition. 
\citet{gaudet1991} looked at how a forced marginal change of production quantity for a subset of firms impacts profit in a situation with strategic substitutes. 
\citet{fevrier2004} gave a decomposition of price shocks in Cournot oligopoly into an average effect common to all firms and a heterogeneity effect that is firm-specific. 
\citet{acemoglu2013} present a framework for comparative statics results for a superclass of Cournot oligopoly called aggregative games, see also 
\citet{corchon1994} and \citet{Kukushkin94}.

Worst case perspectives have a long tradition in the analysis of performance guarantees of algorithms 
(cf.~\citet{williamson2011}). With the introduction of concepts like the ``price of anarchy'' (cf. \citet{Koutsoupias99}) they have found their way into economics and game theory (a good
overview on the intersection of computer science and economics is given in the textbook~\citet{Nisan:2007}). 
By now, there is a large body of literature quantifying the worst case efficiency loss of equilibria
(cf.~\citet{Roughgarden02,Johari04,CorSS04})
and, in particular,
 in models related to Cournot competition (cf.~\citet{AndersonR03,FarahatP09,FarahatP11,Guo05,Johari05, KlubergP12, TsitsiklisX13,Tsitsiklis14}).
All these works follow the ``price of anarchy'' methodology, where
some performance measure of an equilibrium outcome (e.g., total profit and/or social welfare) 
is compared with a
corresponding optimal solution maximizing the respective performance measure.
There is a conceptual difference of our approach compared to these previous worst case approaches: while in approximation algorithms and price of anarchy one compares an optimum to a solution that is returned by an algorithm or that is an equilibrium, we maintain an equilibrium concept and compare the equilibrium of an instance with that of a perturbed instance. Of course, in our setting,
a bound on the price of anarchy (with respect to a specific welfare function)
translates to a bound on the worst-case effect of a parameter change
as long as the optimal welfare increases with the parameter change (which is true
in our case). However, the price of anarchy in cournot oligopolies, e.g., for total welfare (involving affine price functions and convex costs) is
known to be $4/(3+n)$, where $n$ is the number of firms 
(cf.~\citet{Harks_Miller2011,Moulin05}).
Also for social welfare the known bound of $2/3$ (cf.~\citet{Johari05})
does not give tight bounds for our setting.
In the context of multimarket Cournot oligopoly, to the best of our knowledge, we derive for the first time a worst case quantification of a parameter change previously only qualitatively described by comparative statics.  

There are works in theoretical computer science and mathematics that  quantify the effects of a parameter change, albeit for different settings. Prominent examples include the analysis of the famous Braess paradox, where a limit on the decrease in a network's performance when additional links are added is shown 
(cf.~\citet{Korilis99,LinRTW11,Roug06,Valiant10}).

\section{Multimarket Cournot Oligopoly Model}\label{sec:model}
In this section we introduce the specific model for which we investigate robust quantitative comparative statics. As outlined above, the three ingredients for such analysis are a set of games $\G$, for each such game $G\in\G$ a set of feasible parameter changes $\Delta_G$ and an objective function $f:\G\rightarrow\R$ to evaluate the games.
\paragraph{Class of games $\G$.}
We consider multimarket oligopolies defined as follows:
In a game $G\in\G$ a set 
$N$ of $n$ firms competes in a set of markets $M$. 
Each firm~$i\in N$ has access to some subset $M_i\subseteq M$ of these markets. A strategy of a firm~$i$ is to choose  nonnegative production quantities $q_{i,m}$ for all markets $m\in M_i$ that it serves. We set $q_{i,m}=0$ for any market not served by firm $i$ and denote the total quantity of firm $i$ by $q_i = \sum_{m\in M}{q_{i,m}}$. Correspondingly, the total quantity on any market $m$ is denoted by $q_m=\sum_{i\in N}{q_{i,m}}$.

\begin{assumption}\label{ass1}
 The price of a market is an affine function of the total quantity produced, i.e., if on market $m\in M$ a quantity $q_m$ is produced, the price is $p_m(q_m)=s_m - r_m q_m,$ where $s_m>0$ is an initial positive price and the coefficient $r_m>0$ describes how the price decreases as the total quantity is increased.
\end{assumption}
We sometimes denote the price function on a market simply by $p_m(q)$. 
\begin{assumption}\label{ass2}
Firm $i$'s cost for producing the quantity $q_i$ 
is given by the function $c_i(q_i)$ which we assume to be nondecreasing, convex and differentiable in $q_i$ with $c_i(0)=0$. 
\end{assumption}

Given production quantities for all firms, the utility of firm $i$ is defined as \[ u_i(q) := \sum_{m\in M}{ p_m(q_m) q_{i,m}} - c_i(q_{i}).\]  
In a Cournot equilibrium, firms choose their quantities so as to maximize their utility. Particularly, in a Cournot equilibrium $q$ no firm can increase its utility by unilaterally deviating to a different strategy, i.e., $\ut_i(q)\geq \ut_i(q'_i,q_{-i})$ for all strategies $q'_i$ available to firm $i$. 
As the games introduced above
 involve convex and compact strategy spaces together with quasi-concave
 utilities, standard fixed-point arguments of type Kakutani (cf.~\cite{Debreu52,Fan52,Glicksberg52,Kakutani41}) 
 imply the existence of an equilibrium. Note that compactness of the strategy space
 can be guaranteed, because for every firm and every market there exists a maximum quantity
 at which the market price goes down to zero eliminating the firm's profit. Thus, one can effectively bound all quantities by a finite value. As shown in Section~\ref{sec:profit}, the assumptions
 on the price and cost functions further imply that there is a unique equilibrium.

We denote the marginal revenue of firm $i$ on market $m$ by 
\[\pi_{i,m}(q_{i,m},q_{-i,m})=\frac{\partial\left(p_m(q_{i,m}+q_{-i,m})q_{i,m}\right)}{\partial q_{i,m}} 
= p_m(q_m)-r_mq_{i,m},
\]
where $q_{-i,m}$ is the quantity sold by firms $j\neq i$. The marginal cost is $c_i'(q_i)$, and we will often write $\pi_{i,m}(q)$ and $c_i'(q)$. In an equilibrium $x$, the marginal revenue of any served market $m$ equals the marginal cost: $\pi_{i,m}(x)=c'_i(x)$ for all $i\in N$ with $x_{i,m}>0$.
\paragraph{Parameter changes $\Delta_G$.}
The parameter changes analyzed are positive shocks to the price functions: 
on every market $m$ of a game $G\in\G$ the price increases by some amount $\delta_m\geq 0$ such that $p_m^{\delta}(q_m)=p_m(q_m) + \delta_m$. 
We denote the set of feasible parameter changes as 
 $\Delta_G:=\R^{|M|}_{\geq 0}$, where $|M|$ is the number of markets in $G$.

\paragraph{Objective functions $f$.}
The games in $\G$ have unique Cournot equilibria (as shown in Section \ref{sec:profit} and Section~\ref{sec:unique}), and our three objective functions are equilibrium firm profit, equilibrium total profit and equilibrium social welfare.
Given a game $G$ and a price shock $\delta$, we denote the unique equilibrium of $G$ by $x$ and the unique equilibrium of $G(\delta)$ by $y$.
\begin{enumerate}[noitemsep]
    \item firm profit: the profit of an individual firm is $u_i(q)= \sum_{m\in M}{ p_m(q_m) q_{i,m}} - c_i(q_{i})$ and we minimize across the firms of a game to obtain an overall measure: 
    \[\gamma^u(G,\delta):=\min_{i\in N}{\frac{\ut^\delta_i(y)}{\ut_i(x)}}.\]
    \item total profit: we consider utilitarian welfare, i.e. the sum of the profits of all firms $U(q):=\sum_{i\in N}{u_i(q)}$, such that
    \[\gamma^U(G,\delta):=\frac{U^\delta(y)}{U(x)}.\]
    \item social welfare: this measure assumes that the price function of a market expresses the marginal value that the buyers in the market have from additional quantity of the good. We denote $S(q):=\sum_{m\in M}\int_{0}^{q_m}p_m(z)dz-\sum_{i\in N}c_i(q_i)$, such that
    \[ \gamma^S(G,\delta):=\frac{S^\delta(y)}{S(x)}.\]
\end{enumerate}
While the first measure has been analyzed by 
\citet{Bulow85},
the second two measures have been used among others by 
\citet{AndersonR03}, 
\citet{Ushio85} and 
\citet{TsitsiklisX13}. For each measure, we strive to provide a infimum bound across all multimarket Cournot games in $\G$ and price shocks $\Delta_\G$ combined with concrete games that converge to this bound. 
The following example instance exhibits the basic intuition for the quantitative analysis in Sections~\ref{sec:profit} and~\ref{sec:aggregate}.

\paragraph{Example instance. }
 Consider a game with two markets. Market $1$ is served only by a monopolist (denoted as firm $a$), while all firms compete in market $2$.
Given a positive price shock
 on market $1$, firm $a$ will reduce its quantity on market $2$
 in favor of selling more on its more profitable monopoly market, see Figure~\ref{fig:ex1}.
In effect, the competitors' marginal revenue strictly increases and leads
to a new equilibrium in which these competitors increase their quantities, see Figure~\ref{fig:ex2}.
Markets in which
a less aggressive play of one firm leads to a more aggressive
play of competitors are called markets with strategic substitutes, see~\citet{Bulow85}.
The more aggressive competition experienced by firm $a$ in market~$2$ 
reduces its profit below the original level, or said differently, 
after the price shock a part of firm $a$'s quantity on market $2$ has
been substituted by the competitors, see Figure~\ref{fig:ex3}.

\paragraph{Negative Price Shocks.}\label{sec:negative}
While we restrict our analysis to positive price shocks, the results immediately extend to negative price shocks. 
If for example a subsidy (i.e. a positive price shock) causes the equilibrium 
to shift such that the firms' profits decrease, then taking back the subsidy (i.e. a negative price shock) will restore the old equilibrium and thus increase the firm's profits. In this sense the two effects are dual: any negative price shock can be seen as taking back a positive price shock, and the profit gain from the negative price shock is equal to the profit loss from positive price shock. This is true for any objective function $f$ and, denoting negative price shocks to a game by ${\Delta^-_\G}$ we have $\gamma^f(\G,{\Delta^-_\G})=(\gamma^f(\G,\Delta_\G))^{-1}$.

\begin{figure}\label{fig:ex}
\caption{Example instance with two markets}
\centering
\begin{subfigure}[b]{1\textwidth}
\centering 
\begin{tikzpicture}[scale=1]
    \draw [<->] (0,3.5) node (yaxis) [above] {}
        |- (5,0) coordinate (xaxis);
    \draw (xaxis) node[below] {$q_{a}$};
    \coordinate (pi_1) at (0,3);
    \coordinate (pi_2) at (3,0);
    \coordinate (p1) at (0,0.75);
    \coordinate (p2) at (4,0.75);
    \coordinate (i) at (intersection of pi_1--pi_2 and p1--p2);
    \draw[thick] (pi_1) -- node[sloped,above,near start] {$\pi_{a,2}(q_{a,2},x_{-a})$} (pi_2) ; 
    \draw[ultra thick] (pi_1) -- (i) ; 
    \draw[ultra thick] (i) -- (2.7,0.75) coordinate (i2) ; 
    \draw[thick] (p1) -- (p2) node[right]{$p_1$}; 
    \draw[loosely dotted] (i)  -- (i |- xaxis) node[below] {$x_{a,2}$};
    \draw[loosely dotted] (i2)  -- (2.7,-.3) node[below] {$x_{a,2}+x_{a,1}$};
    \coordinate (c_1) at (0,0.2);
    \coordinate (c_1a) at (2.5,0.6);
    \coordinate (c_2) at (4.3,2.5);
    \draw [thick] plot [smooth] coordinates {(c_1) (c_1a) (c_2)} node[above, align=left]{$c_a'(q_{a})$}; 
\end{tikzpicture}
\begin{tikzpicture}[scale=1]
    \draw [<->] (0,3.5) node (yaxis) [above] {}
        |- (5,0) coordinate (xaxis);
    \draw (xaxis) node[below] {$q_{i}$};
    \coordinate (pi_1) at (0,1.5);
    \coordinate (pi_2) at (1.5,0);
    \draw[thick] (pi_1) -- node[sloped,above] {$\pi_{i,2}(q_{i},x_{-i})$\ \ \ \ \ \ } (pi_2) ; 
    \coordinate (i) at (1.2,0.3);
    \draw[dotted] (i)  -- (i |- xaxis) node[below] {$x_{i,2}$};
    \draw (2.7,-.3) node[below] {\phantom{$x_{a,2}+x_{a,1}$}}; 
    \coordinate (c_1) at (0,0.2);
    \coordinate (c_1a) at (1.7,0.5);
    \coordinate (c_2) at (3.2,2.5);
    \draw [thick] plot [smooth] coordinates {(c_1) (c_1a) (c_2)} node[above, align=left]{$c_i'(q_{i})$}; 
\end{tikzpicture}
\caption{\textit{Initial equilibrium $x$:} The price on market $1$ is constant at $p_1$ (thus also firm $a$'s marginal revenue). On market $2$ the price is decreasing, such that firm $a$ sees marginal revenue $\pi_{a,2}(\cdot,x_{-a})$, given its competitors' equilibrium quantities $x_{-a}$. Firm $a$ produces $x_{a,2}$ on market $2$ such that the marginal revenue there is equal to the marginal revenue $p_1$ on market $1$. Its production on market $1$ is such that the marginal cost of the aggregate quantity $x_{a,1}+x_{a,2}$ is equal to $p_1$. The competitors $i\neq a$ have marginal revenue $\pi_{i,2}(\cdot,x_{-i})$ on market $2$. They choose their equilibrium quantity $x_{i,2}$ at the intersection of marginal cost and marginal revenue.}
 \label{fig:ex1}
\end{subfigure} 

\begin{subfigure}[b]{1\textwidth}
\centering 
\begin{tikzpicture}[scale=1]
    \draw [<->] (0,3.5) node (yaxis) [above] {}
        |- (5,0) coordinate (xaxis);
    \draw (xaxis) node[below] {$q_{a}$};
    \coordinate (pi_1) at (0,3);
    \coordinate (pi_2) at (3,0);
    \coordinate (p1) at (0,0.75);
    \coordinate (p2) at (4,0.75);
    \coordinate (p1delta) at (0,1.3);
    \coordinate (p2delta) at (4,1.3);
    \coordinate (i) at (intersection of pi_1--pi_2 and p1--p2);
    \coordinate (ix) at (intersection of pi_1--pi_2 and p1delta--p2delta);
    \coordinate (i2) at  (2.7,0.75); 
    \draw[thin,lightgray] (p1) -- (p2) node[right]{$p_1$}; 
    \draw[loosely dotted,lightgray] (i)  -- (i |- xaxis);
    \draw[loosely dotted,lightgray] (i2)  -- (i2 |- xaxis);
    \draw[thick] (pi_1) -- node[sloped,above,near start] {$\pi_{a,2}(q_{a,2},x_{-a})$} (pi_2) ; 
    \draw[thick] (p1delta) -- (p2delta) node[right]{$p_1+\delta$}; 
    \coordinate (c_1) at (0,0.2);
    \coordinate (c_1a) at (2.5,0.6);
    \coordinate (c_2) at (4.3,2.5);
    \draw [thick] plot [smooth] coordinates {(c_1) (c_1a) (c_2)} node[above, align=left]{$c_a'(q_{a})$}; 
    \draw[ultra thick] (pi_1) -- (ix) ; 
    \draw[ultra thick] (ix) -- (3.3,1.3) coordinate (i2x); 
    \draw[loosely dotted] (ix)  -- (ix |- xaxis);
    \draw[loosely dotted] (i2x)  -- (i2x |- xaxis);
    \draw (ix |- xaxis) node[below right] {$\leftarrow$};
    \draw (i2x |- xaxis) node[below left] {$\rightarrow$};
\end{tikzpicture}
\begin{tikzpicture}[scale=1]
    \draw [<->] (0,3.5) node (yaxis) [above] {}
        |- (5,0) coordinate (xaxis);
    \draw (xaxis) node[below] {$q_{i}$};
    \coordinate (pi_1) at (0,1.5);
    \coordinate (pi_2) at (1.5,0);
    \coordinate (pishift_1) at (0,3);
    \coordinate (pishift_2) at (3,0);
    \draw[thick] (pi_1) -- (pi_2) ; 
    \draw[thick] (pishift_1) -- node[sloped,above,near start] {$\pi_{i,2}(q_{i},y_{-i})$} (pishift_2); 
    \draw[->] (0.2,1.5) -- (0.7,2);
    \draw[->] (0.7,1) -- (1.2,1.5);
    \draw[->] (1.2,0.5) -- (1.7,1);
    \coordinate (i) at (1.2,0.3);
    \coordinate (ix) at (2.1,.95);
    \draw[dotted] (i)  -- (i |- xaxis) node[below] {$x_{i,2}$};
    \draw[loosely dotted] (ix)  -- (ix |- xaxis) node[below] {$y_{i,2}$};
    \coordinate (c_1) at (0,0.2);
    \coordinate (c_1a) at (1.7,0.5);
    \coordinate (c_2) at (3.2,2.5);
    \draw [thick] plot [smooth] coordinates {(c_1) (c_1a) (c_2)} node[above, align=left]{$c_i'(q_{i})$}; 
\end{tikzpicture}
\caption{\textit{Price shock triggers strategic substitution:} The shock leads firm $a$ to shift production from market $2$ to market $1$. This increases the marginal revenue of competitors on market $2$, causing increased production $y_{i,2}>x_{i,2}$.}
 \label{fig:ex2}
\end{subfigure} 

\begin{subfigure}[b]{1\textwidth}
\centering 
\begin{tikzpicture}[scale=1]
    \draw [<->] (0,3.5) node (yaxis) [above] {}
        |- (5,0) coordinate (xaxis);
    \draw (xaxis) node[below] {$q_{a}$};
    \coordinate (pi_1) at (0,3);
    \coordinate (pi_2) at (3,0);
    \coordinate (pishift_1) at (0,1.8);
    \coordinate (pishift_2) at (1.8,0);
    \draw[thick] (pishift_1) -- node[sloped,below] {$\pi_{a,2}(q_{a,2},y_{-a})$} (pishift_2); 
    \coordinate (p1delta) at (0,1.3);
    \coordinate (p2delta) at (4,1.3);
    \coordinate (i) at (intersection of pi_1--pi_2 and p1--p2);
    \coordinate (ix) at (intersection of pi_1--pi_2 and p1delta--p2delta);
    \coordinate (iy) at (intersection of pishift_1--pishift_2 and p1delta--p2delta);
    \coordinate (i2) at  (2.7,0.75); 
    \draw[thick] (pi_1) -- (pi_2) ; 
    \draw[thick] (p1delta) -- (p2delta) node[right]{$p_1+\delta$}; 
    \coordinate (c_1) at (0,0.2);
    \coordinate (c_1a) at (2.5,0.6);
    \coordinate (c_2) at (4.3,2.5);
    \draw [thick] plot [smooth] coordinates {(c_1) (c_1a) (c_2)} node[above, align=left]{$c_a'(q_{a})$}; 
    \draw[fill=lightgray, fill opacity=0.5] (pi_1) -- (ix) -- (iy) -- (pishift_1) -- cycle;
    \draw[ultra thick] (pishift_1) -- (iy) ; 
    \draw[ultra thick] (iy) -- (3.3,1.3) coordinate (i2x); 
    \draw[<-] (0.35,1.65) -- (0.7,2);
    \draw[<-] (1.35,0.65) -- (1.7,1);
    \draw[loosely dotted] (i2x)  -- (i2x |- xaxis) node[below] {$y_{a,2}+y_{a,1}$};
    \draw[loosely dotted] (iy)  -- (iy |- xaxis) node[below] {$y_{a,2}$};
\end{tikzpicture}
\begin{tikzpicture}[scale=1]
    \draw [<->] (0,3.5) node (yaxis) [above] {}
        |- (5,0) coordinate (xaxis);
    \draw (xaxis) node[below] {$q_{i}$};
    \coordinate (pishift_1) at (0,3);
    \coordinate (pishift_2) at (3,0);
    \draw[thick] (pishift_1) -- node[sloped,above,near start] {$\pi_{i,2}(q_{i},y_{-i})$} (pishift_2); 
    \coordinate (ix) at (2.1,.95);
    \draw[loosely dotted] (ix)  -- (ix |- xaxis) node[below] {$y_{i,2}$};
    \coordinate (c_1) at (0,0.2);
    \coordinate (c_1a) at (1.7,0.5);
    \coordinate (c_2) at (3.2,2.5);
    \draw [thick] plot [smooth] coordinates {(c_1) (c_1a) (c_2)} node[above, align=left]{$c_i'(q_{i})$}; 
\end{tikzpicture}
\caption{\textit{Effect on firm $a$'s profit:} This derogates firm $a$'s marginal revenue on market $2$, causing it to further withdraw from the market. The shaded area indicates the profit loss from the strategic substitution, which is partially compensated by  profit gain from the increased price on market $1$.}
 \label{fig:ex3}
\end{subfigure}
\end{figure}
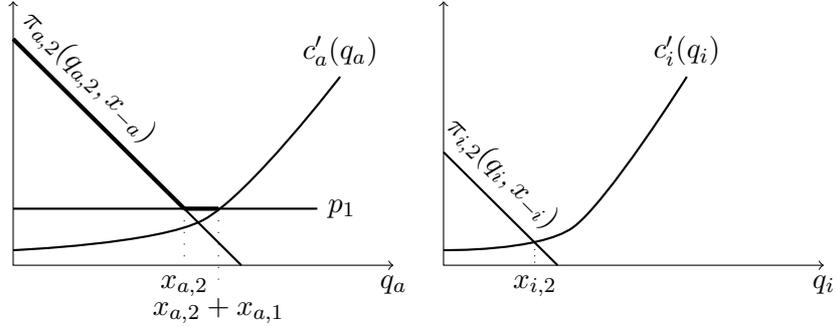 

\section{Maximum Profit Loss of An Individual Firm}\label{sec:profit}
We use robust quantitative comparative statics to investigate the worst case profit loss of a firm from a positive price shock as expressed by $\gamma^u(\G,\Delta_\G)$.
\begin{theorem}\label{thm:main}
Given a game $G$ with $n$ firms, no firm loses more than a $\frac{(n-1)^2}{4n^2}$ fraction of its profit from a price shock  $(\delta_m)_{m\in M}$ with $\delta_m\geq 0$ for all $m\in M$,
\[ \gamma^u(G,\delta)\geq 1-\frac{(n-1)^2}{4n^2}\geq \frac{3}{4}.
\]
\end{theorem}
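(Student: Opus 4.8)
The plan is to combine a closed form for equilibrium profits, a single monotonicity statement about the comparative statics, and a market-by-market quadratic estimate that produces the constant $\tfrac{(n-1)^2}{4n^2}$.

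First I would rewrite profits through the first-order conditions. Write $x$ for the equilibrium of $G$ and $y$ for that of $G(\delta)$, and set $\lambda_i:=c_i'(x_i)$, $\mu_i:=c_i'(y_i)$. Since $p_m(x_m)=\pi_{i,m}(x)+r_mx_{i,m}$, and $\pi_{i,m}(x)=\lambda_i$ on every market $i$ serves with positive quantity (while $x_{i,m}=0$ otherwise), one gets
\[ u_i(x)=\lambda_i x_i+\sum_{m\in M}r_mx_{i,m}^2-c_i(x_i),\qquad u_i^\delta(y)=\mu_i y_i+\sum_{m\in M}r_my_{i,m}^2-c_i(y_i). \]
Convexity of $c_i$ with $c_i(0)=0$ gives $c_i(x_i)\le c_i'(x_i)x_i$, hence the crucial lower bound $u_i(x)\ge\sum_{m}r_mx_{i,m}^2$. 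Now fix the firm $a$ attaining the minimum in $\gamma^u(G,\delta)$. The subgradient inequality $c_a(x_a)\ge c_a(y_a)+\mu_a(x_a-y_a)$ — which holds whatever the sign of $y_a-x_a$ — turns the two identities above into
\[ u_a^\delta(y)-u_a(x)\ \ge\ (\mu_a-\lambda_a)\,x_a+\sum_{m\in M}r_m\bigl(y_{a,m}^2-x_{a,m}^2\bigr). \]

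Next I would insert the Cournot conditions, for firm $a$ and then for its competitors. Subtracting $a$'s first-order condition on a served market before and after the shock gives $\delta_m-2r_me_m-r_mD_m=\mu_a-\lambda_a$ on markets $a$ serves with positive quantity both before and after, and ``$\le$'' if $a$ exits market $m$, where $e_m:=y_{a,m}-x_{a,m}$ and $D_m:=y_{-a,m}-x_{-a,m}$; on markets $a$ never produces on every quantity is $0$. Substituting $y_{a,m}^2-x_{a,m}^2=2e_mx_{a,m}+e_m^2$, the $(\mu_a-\lambda_a)x_a$ contribution cancels $(\mu_a-\lambda_a)\sum_m x_{a,m}$, leaving
\[ u_a(x)-u_a^\delta(y)\ \le\ \sum_{m\in M}\bigl(r_mD_mx_{a,m}-\delta_mx_{a,m}-r_me_m^2\bigr). \]
Markets with $D_m\le0$ contribute nonpositively and may be dropped. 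On a market with $D_m>0$, summing the first-order conditions of the competitors active there after the shock, and using that a positive price shock does not lower any competitor's marginal cost (so the cross terms $\sum_{j\ne a}(\mu_j-\lambda_j)$ that I discard are nonnegative), yields $r_mD_m\le\tfrac{n-1}{n}(\delta_m-r_me_m)$; in particular $\delta_m-r_me_m>0$, and $r_mD_mx_{a,m}-\delta_mx_{a,m}\le-\tfrac{n-1}{n}r_me_mx_{a,m}$. Hence each such market contributes at most $-r_m\bigl(e_m^2+\tfrac{n-1}{n}x_{a,m}e_m\bigr)\le r_m\tfrac{(n-1)^2}{4n^2}x_{a,m}^2$ by completing the square in $e_m$. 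Summing over markets and using $u_a(x)\ge\sum_m r_m x_{a,m}^2$ gives $u_a(x)-u_a^\delta(y)\le\tfrac{(n-1)^2}{4n^2}u_a(x)$, i.e.\ the first inequality; the second follows from $\tfrac{(n-1)^2}{4n^2}=\tfrac14\bigl(1-\tfrac1n\bigr)^2\le\tfrac14$.

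I expect the main obstacle to be the monotonicity input used in the competitor step, namely that a positive price shock does not decrease any firm's total output (so $\mu_i\ge\lambda_i$ for all $i$), which is the genuine comparative-statics content here. I would try to prove it by exploiting uniqueness of the equilibrium — established elsewhere in this section — together with a Topkis/monotonicity argument after reversing the order on the opponents' coordinates, since Cournot markets are games of strategic substitutes; failing a clean lattice argument, a continuation argument along $t\mapsto G(t\delta)$ with a limiting/perturbation step to accommodate the merely-once-differentiable costs should work. A secondary nuisance is the bookkeeping of corner solutions — markets a firm enters or exits after the shock — but in each case the first-order equality degrades to an inequality pointing in the favorable direction, so all the estimates above go through unchanged.
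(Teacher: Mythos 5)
Your argument is correct and arrives at the same constant, but by a genuinely different decomposition from the paper's. The paper establishes the same two monotonicity facts you need (every market price weakly rises, Lemma~\ref{lem:relationship1}, and every firm's total output weakly rises, Lemma~\ref{lem:relationship2}, hence $c_j'(y_j)\ge c_j'(x_j)$), and then bounds the \emph{ratio} $u_i^\delta(y)/u_i(x)$ directly: it discards the markets where $i$ raises its quantity, selects by an argmin a single worst market in $M^-$, subtracts $c_i'(x)x_{i,m}$ from numerator and denominator (using that this shrinks a fraction below one), and finishes on that one market with the pass-through bound $p_1^\delta(y)\ge p_1(x)+\tfrac{r_1}{n}(x_{i,1}-y_{i,1})$ — obtained by exactly your "at most $n-1$ competitors, each gaining at most the price increase" count, which is equivalent to your $r_mD_m\le\tfrac{n-1}{n}(\delta_m-r_me_m)$ — followed by the same completion of the square. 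You instead bound the \emph{difference} $u_a(x)-u_a^\delta(y)$ using the closed-form profit identities $u_i=\lambda_ix_i+\sum_mr_mx_{i,m}^2-c_i(x_i)$ and normalize by $u_a(x)\ge\sum_mr_mx_{a,m}^2$; this sums a per-market quadratic loss and avoids both the worst-market selection and the fraction manipulations. (Your normalization is essentially the inequality the paper proves separately in Lemma~\ref{lem:surplus} for the social-surplus bound, so the two objectives end up sharing machinery in your version.) Your corner-case bookkeeping is right: on exited markets the first-order equality degrades to "$\le$" in the favorable direction, and on entered markets $x_{a,m}=0$ kills the linear term and leaves $-r_me_m^2\le0$.

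The one genuine gap is the input you flag yourself: $y_j\ge x_j$ for every firm, needed so that $\mu_j\ge\lambda_j$ in the competitor step. This is true, but the Topkis route you propose is delicate here: the payoff couples markets through the convex cost of the aggregate $q_j=\sum_mq_{j,m}$, so the game is not obviously supermodular (or submodular after reversing orders) on the product of market quantities. The paper instead gives a short direct argument in two steps: first it shows no market price can fall — if one did, some firm raising its quantity on such a market would have $c_i'(y_i)<c_i'(x_i)$, hence $y_i<x_i$, yet every market on which it cuts back would also have to have a falling price, contradicting the sign of $\sum_m(y_{i,m}-x_{i,m})$; second, given that all prices rise, a firm with $y_i<x_i$ would have to weakly raise its quantity on every market it serves, again a contradiction. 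You should carry out this (or your continuation) argument explicitly to close the proof; everything else in your write-up goes through.
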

This is our main result. It shows that no firm loses more than $25\%$
of current profits. This bound is \emph{robust} in the sense
that it holds for an entire class of games and parameters, that is, in order to arrive at this bound the modeler only needs to know that price shocks are nonnegative, inverse demand functions
are affine, and cost technologies are convex.

To prove the statement, we first establish uniqueness of equilibria and that a price shock $\delta\geq 0$ causes the price on every market to increase, and that in this favorable setting every firm increases its total quantity. 
Using these insights into the effect of the price shock, for any given firm $i$ we can
identify a market where this firm suffers the relatively strongest loss and use this to bound $\frac{u_i^\delta(y)}{u_i(x)}$. This proves the theorem, as $\gamma^u(G,\delta)$ is the minimum of this fraction across all firms.
The following lemma follows easily by the fact that our games are exact (concave) potential
games over a convex and compact strategy space, see~\citet{Monderer96}. We give here an alternative combinatorial proof
and some of the ideas will be used later on. 
\begin{lemma}[Uniqueness of equilibria]\label{lem:multiunique}
Let $x$ and $y$ be equilibria of some game $G$. Then $x=y$.
\end{lemma}
\proof
As a first step of the proof, we show that on every market $m$, $x_m = y_m$. Let $M^+:=\{m\in M:x_m<y_m\}$ and assume for a contradiction that $M^+\neq \emptyset$. Then, there is a firm $i$ with $\sum_{m\in M^+}{(y_{i,m}-x_{i,m})}>0$ and a market $m\in M^+$ with $y_{i,m}>x_{i,m}\geq 0$. It follows from the equilibrium definition that in $y$ firm $i$'s marginal cost and marginal revenue on $m$ are equal, i.e.,
\begin{equation}\label{c_iyleqc_ix}
  c_i'(y_i)=p_m(y_m)-r_my_{i,m}<p_m(x_m)-r_mx_{i,m}\leq c_i'(x_i),
\end{equation}
where we used that $p_m(y_m)<p_m(x_m)$ because $m\in M^+$. From $c_i'(y_i)<c_i'(x_i)$ it follows that 
\begin{equation}\label{y_ileqx_i}
  y_i<x_i.
\end{equation}
Also, for all markets $m'\in M$ where $y_{i,m'}<x_{i,m'}$, it follows again from the equilibrium definition that 
\[
  p_{m'}(y_{m'})-r_{m'}y_{i,m'}\leq c_i'(y_i)<c_i'(x_i)\leq p_{m'}(x_{m'})-r_{m'}x_{i,m'},
\]
and hence, $p_{m'}(y_{m'})<p_{m'}(x_{m'})$, i.e. $m'\in M^+$. Then, we find a contradiction as
\[
  0 > y_i - x_i = \sum_{m\in M}{(y_{i,m}-x_{i,m})}>\sum_{m\in M^+}{(y_{i,m}-x_{i,m})}>0.
\]
Here, we can limit the summation from all $m\in M$ to $m\in M^+$ because we found that $m'\in M^+$ for all markets with $y_{i,m'}<x_{i,m'}$.

As the next step of the proof, we use $x_m=y_m$ for all $m\in M$ to show $x_{i,m}=y_{i,m}$ for all firms $i$. For a contradiction, assume there are $i\in N$ and $m\in M$ such that $x_{i,m}<y_{i,m}$. Then, we can again apply~\eqref{c_iyleqc_ix} to obtain $y_i<x_i$ and there must be some market $m'\in M$ with $y_{i,m'}<x_{i,m'}$, which leads with the same reasoning to $x_i<y_i$, a contradiction. 
Altogether, $x=y$.
\endproof

We now show that the prices on all markets increase after the positive price shock.
\begin{lemma}\label{lem:relationship1}
Let $x$ and $y$ be equilibria of a game $G$ before and after a price shock $(\delta_m)_{m\in M}$ with $\delta_m\geq 0$ for all $m\in M$. Then, 
on all markets $m\in M$ the price in $y$ is higher than in $x$, that is, $p_m^\delta(y_m)\geq p_m(x_m)$. 
\end{lemma}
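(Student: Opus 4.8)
The plan is to run the same contradiction argument that drives the proof of Lemma~\ref{lem:multiunique}. Define the set of markets on which the price fails to increase, $M^-:=\{m\in M: p_m^\delta(y_m)<p_m(x_m)\}$, and suppose for a contradiction that $M^-\neq\emptyset$. The first step is to translate this price inequality into a quantity inequality: since $p_m^\delta(y_m)=p_m(x_m)+\delta_m-r_m(y_m-x_m)$ and $\delta_m\geq 0$, the condition $p_m^\delta(y_m)<p_m(x_m)$ forces $r_m(y_m-x_m)>\delta_m\geq 0$, hence $y_m>x_m$ on every market $m\in M^-$. Summing over $M^-$ gives $\sum_{m\in M^-}y_m>\sum_{m\in M^-}x_m$, so some firm $i$ satisfies $\sum_{m\in M^-}(y_{i,m}-x_{i,m})>0$, and in particular there is a market $m\in M^-$ with $y_{i,m}>x_{i,m}\geq 0$.

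Next I would invoke the equilibrium first-order conditions exactly as in~\eqref{c_iyleqc_ix}. Because $y_{i,m}>0$, firm $i$'s marginal revenue on $m$ equals its marginal cost in $y$; combining $m\in M^-$ (so $p_m^\delta(y_m)<p_m(x_m)$) with $y_{i,m}>x_{i,m}$ and the equilibrium condition in $x$ yields
\[
c_i'(y_i)=p_m^\delta(y_m)-r_my_{i,m}<p_m(x_m)-r_mx_{i,m}\leq c_i'(x_i).
\]
Convexity (monotonicity of $c_i'$) then gives $y_i<x_i$, the analogue of~\eqref{y_ileqx_i}.

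The final step propagates this just as in Lemma~\ref{lem:multiunique}: for any market $m'$ with $y_{i,m'}<x_{i,m'}$ we have $x_{i,m'}>0$, so $p_{m'}(x_{m'})-r_{m'}x_{i,m'}=c_i'(x_i)>c_i'(y_i)\geq p_{m'}^\delta(y_{m'})-r_{m'}y_{i,m'}$, and since $y_{i,m'}<x_{i,m'}$ this forces $p_{m'}^\delta(y_{m'})<p_{m'}(x_{m'})$, i.e. $m'\in M^-$. Hence $y_{i,m}\geq x_{i,m}$ for all $m\notin M^-$, and
\[
0>y_i-x_i=\sum_{m\in M}(y_{i,m}-x_{i,m})\geq\sum_{m\in M^-}(y_{i,m}-x_{i,m})>0,
\]
a contradiction. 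Therefore $M^-=\emptyset$, which is the claim.

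I do not expect a real obstacle: the argument is a direct adaptation of Lemma~\ref{lem:multiunique}, the only genuinely new ingredient being the bookkeeping of the additive shift $\delta_m$ when passing from $p_m^\delta(y_m)<p_m(x_m)$ to $y_m>x_m$. The one point requiring care is the boundary case of a firm producing zero on a market, where one must use the \emph{inequality} form $\pi_{i,m}(x)\le c_i'(x_i)$ of the first-order condition (equality holds only when $x_{i,m}>0$); this is exactly what makes the two displayed chains above valid.
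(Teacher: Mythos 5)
Your proof is correct and follows essentially the same route as the paper: the paper's own proof of this lemma simply defines the set of markets where the price decreases (your $M^-$ is exactly its $M^+=\{m: x_m+\delta_m/r_m<y_m\}$) and says the contradiction argument of Lemma~\ref{lem:multiunique} carries over unchanged, which is precisely the argument you have written out in full, including the correct handling of the boundary case via the inequality form of the first-order condition.
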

\proof 
While in the proof of Lemma~\ref{lem:multiunique} we compared two equilibria of the same game, we now compare the equilibria before and after the price shock. The analysis remains largely unchanged: 
if we denote by $M^+:=\{m\in M:x_m+\frac{\delta_m}{r_m}<y_m\}$
the set of markets where the price decreases, then $M^+\neq\emptyset$ still implies that there is some firm $i$ with $\sum_{m\in M^+}{(y_{i,m}-x_{i,m})}>0$ and $y_i<x_i$ as in~\eqref{y_ileqx_i}, leading to the same contradiction as before. 
It follows that $M^+=\emptyset$, i.e. $p_m^\delta(y_m)\geq p_m(x_m)$ for all $m\in M$.
\endproof

Given increasing prices, all firms increase their quantity.
\begin{lemma}\label{lem:relationship2}
Let $x$ and $y$ be equilibria of a game $G$ before and after a price shock $(\delta_m)_{m\in M}$ with $\delta_m\geq 0$ for all $m\in M$. 
Then, each firm $i\in N$ produces more in $y$ than in $x$, $y_i\geq x_i$.
\end{lemma}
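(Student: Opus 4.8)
The plan is to argue by contradiction, combining the equilibrium first-order conditions with the price-increase guarantee of Lemma~\ref{lem:relationship1}. Suppose that $y_i<x_i$ for some firm $i\in N$. Since $y_i=\sum_{m\in M}y_{i,m}$ and $x_i=\sum_{m\in M}x_{i,m}$, this strict inequality forces the existence of a market $m$ on which firm $i$ strictly decreases its output, i.e.\ $y_{i,m}<x_{i,m}$; in particular $x_{i,m}>0$.

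Next I would read off the equilibrium conditions on that market. Because $x_{i,m}>0$, in the equilibrium $x$ of $G$ firm $i$'s marginal revenue equals its marginal cost, $\pi_{i,m}(x)=p_m(x_m)-r_mx_{i,m}=c_i'(x_i)$. In the equilibrium $y$ of $G(\delta)$, optimality for firm $i$ on market $m$ gives $\pi^\delta_{i,m}(y)=p^\delta_m(y_m)-r_my_{i,m}\leq c_i'(y_i)$, with equality if $y_{i,m}>0$ and as the one-sided optimality condition if $y_{i,m}=0$. Invoking Lemma~\ref{lem:relationship1}, which gives $p^\delta_m(y_m)\geq p_m(x_m)$, together with $y_{i,m}<x_{i,m}$, I obtain
\[
  c_i'(y_i)\ \geq\ p^\delta_m(y_m)-r_my_{i,m}\ >\ p_m(x_m)-r_mx_{i,m}\ =\ c_i'(x_i),
\]
hence $c_i'(y_i)>c_i'(x_i)$. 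Since $c_i$ is convex (Assumption~\ref{ass2}), $c_i'$ is nondecreasing, so $y_i<x_i$ implies $c_i'(y_i)\leq c_i'(x_i)$, a contradiction. Therefore $y_i\geq x_i$ for every firm $i\in N$.

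This mirrors the reasoning already used for Lemmas~\ref{lem:multiunique} and~\ref{lem:relationship1}, so no serious obstacle is expected. The only point requiring mild care is the boundary case $y_{i,m}=0$, where the relation $\pi^\delta_{i,m}(y)\leq c_i'(y_i)$ must be used as an inequality (firm $i$ optimally produces nothing on $m$) rather than as the marginal-revenue-equals-marginal-cost equality. An alternative route, transcribing the $M^+$ argument from the proof of Lemma~\ref{lem:multiunique} while tracking total firm quantities rather than market totals, would also work, but the direct contradiction above is shorter.
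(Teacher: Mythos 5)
Your proof is correct and follows essentially the same route as the paper's: both argue by contradiction from $y_i<x_i$, combine the first-order conditions with the price increase from Lemma~\ref{lem:relationship1} and the monotonicity of $c_i'$ under convexity, the only cosmetic difference being that you derive the contradiction as $c_i'(y_i)>c_i'(x_i)$ on a single market where output strictly drops, while the paper reads the same inequality chain in the other direction to conclude $y_{i,m}\geq x_{i,m}$ on every market with $x_{i,m}>0$. Your explicit handling of the boundary case $y_{i,m}=0$ is a welcome clarification that the paper leaves implicit.
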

\proof 
Assume there is $i\in N$ with $y_i<x_i$. Then
\[ \pi^\delta_{i,m}(y)\leq c_i'(y_i)\leq c_i'(x_i)=\pi_{i,m}(x)\]
on every market $m\in M$ with $x_{i,m}>0$. With $p^\delta_m(y_m)\geq p_m(x_m)$ as found in the previous lemma, it follows that $y_{i,m}\geq x_{i,m}$ on every market with $x_{i,m}>0$, a contradiction to $y_i<x_i$. Hence $x_i\leq y_i$ for all $i\in N$.
\endproof

We now show that whenever a firm sells more quantity on every market in the new equilibrium,
then also the firm's profit increases.
\begin{lemma}\label{lem:minus}
Let $x$ and $y$ be equilibria of a game $G$ before and after a price shock $(\delta_m)_{m\in M}$ with $\delta_m\geq 0$ for all $m\in M$. If for a firm $i\in N$ it holds that $y_{i,m}\geq x_{i,m}$ for all $m\in M$, then $u_i(x)\leq u_i^\delta(y)$.
\end{lemma}
\proof
Recall that for every market $m$ with $x_{i,m}>0$ (or  $y_{i,m}>0$), we have
\begin{align*} p_m(x_m)-r_mx_{i,m}&=c'_i(x_i)\\
p_m^{\delta}(y_m)-r_my_{i,m}&=c'_i(y_i).
\end{align*}
Thus, we get
\begin{align}
\notag u_i^\delta(y)-u_i(x)&=\sum_{m\in M}p^\delta_m(y_m)y_{i,m}-p_m(x_m)x_{i,m}-c_i(y_i)+c_i(x_i)\\
 \notag &=c'_i(y_i)y_i -c'_i(x_i)x_i+\sum_{m\in M}r_m(y_{i,m}^2-x_{i,m}^2)-c_i(y_i)+c_i(x_i)\\
\label{ineq:ass} &\geq c'_i(y_i)y_i -c'_i(x_i)x_i-c_i(y_i)+c_i(x_i),
\end{align}
where we used in~\eqref{ineq:ass} that (by  assumption) $y_{i,m}\geq x_{i,m}$ for all $m\in M$.
Finally, 
\begin{align*}
c'_i(y_i)y_i -c'_i(x_i)x_i-c_i(y_i)+c_i(x_i)&=c'_i(y_i)(y_i-x_i)+c'_i(y_i)x_i -c'_i(x_i)x_i-c_i(y_i)+c_i(x_i)\\
&\geq c'_i(y_i)(y_i-x_i)-c_i(y_i)+c_i(x_i)\geq 0,
\end{align*}
where we use that $y_i\geq x_i$ and the assumption that $c_i$ is convex.
\endproof

We are now ready to prove the main theorem.
\proof{Proof of Theorem~\ref{thm:main}.}
Let $G$ be a game with $n$ firms. We show that for any given firm $i$, the inequality
$\frac{u_i^\delta(y)}{u_i(x)}\geq 1-\frac{(n-1)^2}{4n^2}$ holds. The theorem follows because $\gamma^u(G,\delta)$ is the minimum of this quantity across all firms.

 We denote by $M^-:=\{m\in M: y_{i,m}<x_{i,m}\}$ the set of markets where firm $i$ decreases its quantity after the price shock and similarly  by $M^+:=M\setminus M^-$ the set where $i$ increases its quantity after the price shock. Note that by Lemma~\ref{lem:minus} we can assume that $M^-\neq\emptyset$
 since otherwise we obtain already $u_i^\delta(y)\geq u_i(x)$.

 We assume that the markets are indexed such that market 1 is a solution to 
\begin{align}\label{eq:worstmarket}
 \argmin_{m\in M^-}{\frac{(p^\delta_m(y_m)-c_i'(x_i))x_{i,m}-r_m y_{i,m}(x_{i,m}-y_{i,m})}{(p_m(x_m)-c_i'(x_i))x_{i,m}}}.
\end{align}
Note that the denominator of the above fraction is always positive as any market $m\in M^-$ has a non-zero quantity $x_{i,m}>0$ and thus by the first order equilibrium condition also $p_m(x_m)>c_i'(x_i)$.

 We find the following relations that will be helpful later on: The quantity added in markets in $M^+$ corresponds exactly to the quantity taken away from markets in $M^-$ and the additional quantity $y_i-x_i$, i.e.
\begin{align}\label{eq:quant}
 \sum_{m\in M^+}{y_{i,m}-x_{i,m}}&=\sum_{m\in M^-}{x_{i,m}-y_{i,m}} + y_i-x_i.
\end{align}
Also, the price in markets $m\in M^+$ in $y$ is related to the marginal cost, i.e.
\begin{align}\label{eq:pricecost}
 p^\delta_m(y_m)&\geq p^\delta_m(y_m)-r_my_{i,m}=\pi_{i,m}^\delta(y)=c_i'(y_i)
\end{align}
 which in turn is related to the price on markets $m\in M^-$, i.e.
\begin{align}\label{eq:costprice}
 c_i'(y_i)&\geq \pi_{i,m}^\delta(y)=p^\delta_m(y_m)-r_my_{i,m}.
\end{align}
 As the cost function is convex,
\begin{align}\label{eq:costcon1}
 c_i'(y_i)(y_{i}-x_{i})\geq c_i(y_i)-c_i(x_i).
\end{align}

We combine the above to a statement that relates the profit of quantity added to $M^+$ to the profit lost by reducing quantity in $M^-$,
\begin{align}\label{eq:mpluscombined}
\sum_{m\in M^+}p^\delta_m(y_m)&(y_{i,m}-x_{i,m})\overset{\eqref{eq:pricecost}}{\geq} \sum_{m\in M^+}{c_i'(y_i)(y_{i,m}-x_{i,m})}\notag\\
 &\overset{\eqref{eq:quant}}{\geq} \sum_{m\in M^-}{c_i'(y_i)(x_{i,m}-y_{i,m})} + c_i'(y_i)(y_i-x_i)\notag\\ 
 &\overset{\eqref{eq:costprice},\eqref{eq:costcon1}}{\geq} \sum_{m\in M^-}{(p^\delta_m(y_m)-r_my_{i,m})(x_{i,m}-y_{i,m})} +c_i(y_i)-c_i(x_i).
\end{align}

We further assume that $\frac{u_i^\delta(y)}{u_i(x)}<1$, as we are interested in worst case instances and our lower bounds show that such instances exist. 
Note that for a fraction with value less than 1, subtracting the same amount from both numerator and denominator decreases the value of the fraction. We estimate 
\begin{align}
 \frac{u_i^\delta(y)}{u_i(x)}&=\frac{\sum_{m\in M^-}{p_m^\delta(y_m)y_{i,m}}+\sum_{m\in M^+}{p_m^\delta(y_m)y_{i,m}}-c_i(y_i)}{\sum_{m\in M^-}{p_m(x_m)x_{i,m}}+\sum_{m\in M^+}{p_m(x_m)x_{i,m}}-c_i(x_i)}\notag\\
  &\overset{\eqref{eq:mpluscombined}}{\geq}
    \frac{\sum_{m\in M^-}{p_m^\delta(y_m)x_{i,m}-r_my_{i,m}(x_{i,m}-y_{i,m})}+\sum_{m\in M^+}{p_m^\delta(y_m)x_{i,m}}-c_i(x_i)}{\sum_{m\in M^-}{p_m(x_m)x_{i,m}}+\sum_{m\in M^+}{p_m(x_m)x_{i,m}}-c_i(x_i)}\notag\\
  &\geq
    \frac{\sum_{m\in M^-}{(p_m^\delta(y_m)-c_i'(x_i))x_{i,m}-r_my_{i,m}(x_{i,m}-y_{i,m})}+\sum_{m\in M^+}{(p_m(x_m)-c_i'(x_i))x_{i,m}}}{\sum_{m\in M^-}{(p_m(x_m)-c_i'(x_i))x_{i,m}}+\sum_{m\in M^+}{(p_m(x_m)-c_i'(x_i))x_{i,m}}}\label{eq:estim1}\\
  &\geq
    \frac{\sum_{m\in M^-}{(p_m^\delta(y_m)-c_i'(x_i))x_{i,m}-r_my_{i,m}(x_{i,m}-y_{i,m})}}{\sum_{m\in M^-}{(p_m(x_m)-c_i'(x_i))x_{i,m}}}\label{eq:estim2}\\
  &\overset{\eqref{eq:worstmarket}}{\geq}
    \frac{(p_1^\delta(y_1)-c_i'(x_i))x_{i,1}-r_1y_{i,1}(x_{i,1}-y_{i,1})}{(p_1(x_1)-c_i'(x_i))x_{i,1}}\label{eq:majorsimplification}.
\end{align}
In~\eqref{eq:estim1} we use that the cost function is convex and hence $-c_i(x_i)\geq \sum_{m\in M}{-c_i'(x_i)x_{i,m}}$ and in \eqref{eq:estim2} that the price in a market with positive quantity is at least the marginal cost, i.e. $p_m(x_m)\geq c_i'(x_i)$ for a market with $x_{i,m}>0$.

We now further examine the relation of $p_1^\delta(y_1)$, $p_1(x_1)$ and $c_i'(x_i)$. For any firm $j\neq i$ that has increased its quantity on market $1$, i.e. $y_{j,1}>x_{j,1}$, we get
\[
  p_{1}(x_1)-r_{1}x_{j,1}=\pi_{j,1}(x)\leq c_j'(x_j)\leq c_j'(y_j)= \pi_{j,1}^\delta(y)= p_{1}^\delta(y_1)-r_{1}y_{j,1},  
\]
that is,
\begin{align}\label{eq:profitablej}
  r_1(y_{j,1}-x_{j,1})\leq  p_{1}^\delta(y_1)- p_1(x_1).
\end{align}
Then, considering $\sum_{j: y_{j,1}>x_{j,1}}{(y_{j,1}-x_{j,1})}+y_{i,1}-x_{i,1}\geq y_1-x_1$, we can rather precisely observe how the price on market $1$ changes with the price shock.
\begin{align*}
   p_{1}^\delta(y_1)&=p_1(x_1)+\delta_1-r_1(y_1-x_1)\geq p_1(x_1)+r_1(x_{i,1}-y_{i,1})-r_1\!\!\!\!\!\!\sum_{j: y_{j,1}>x_{j,1}}{\!\!\!\!\!\!(y_{j,1}-x_{j,1})}\notag\\
   &\overset{\eqref{eq:profitablej}}{\geq}p_1(x_1)+r_1(x_{i,1}-y_{i,1})-(n-1)(p_{1}^\delta(y_1)- p_1(x_1)),
\end{align*}
as there are at most $n-1$ firms with $y_{j,1}>x_{j,1}$. This can be rearranged to 
\begin{align}\label{eq:pricemarket2}
p_{1}^\delta(y_1)\geq p_1(x_1)+\frac{r_1}{n}(x_{i,1}-y_{i,1}).
\end{align}

Observe further that $x_{i,1}>0$ because market $1$ is in $M^-$ and hence 
\begin{align}\label{eq:minprice}
  p_1(x_1)-r_1x_{i,1}=\pi_{i,1}(x)= c_i'(x_i).
\end{align} 

We continue the proof from \eqref{eq:majorsimplification},
\begin{align}
\frac{u_i^\delta(y)}{u_i(x)}&\overset{\eqref{eq:majorsimplification}}{\geq}
    \frac{(p_1^\delta(y_1)-c_i'(x_i))x_{i,1}-r_1y_{i,1}(x_{i,1}-y_{i,1})}{(p_1(x_1)-c_i'(x_i))x_{i,1}}\notag\\
    &\overset{\eqref{eq:pricemarket2}}{\geq} \frac{(p_1(x_1)-c_i'(x_i))x_{i,1}+\frac{r_1}{n}(x_{i,1}-y_{i,1})x_{i,1}-r_1y_{i,1}(x_{i,1}-y_{i,1})}{(p_1(x_1)-c_i'(x_i))x_{i,1}}\notag\\
    &=1+\frac{r_1(\frac{1}{n}x_{i,1}-y_{i,1})(x_{i,1}-y_{i,1})}{(p_1(x_1)-c_i'(x_i))x_{i,1}}\notag\\
    &\overset{\eqref{eq:minprice}}{=}1+\frac{(\frac{1}{n}x_{i,1}-y_{i,1})(x_{i,1}-y_{i,1})}{x_{i,1}^2} \notag\\
    &\geq 1-\frac{(n-1)^2}{4n^2}= \frac{3}{4}+\frac{2n-1}{4n^2}.\notag
\end{align}
\endproof

\subsection{Lower Bound}
To show that the bound of the previous theorem is tight, we construct a simple instance with matching profit loss. 

\begin{proposition}\label{lem:lb_profit}
 For any $n$, there is a game $G$ with $n$ firms where a positive price shock decreases the profit of some firm $a$ by a factor $\frac{(n-1)^2}{4n^2}$, that is, 
\[ \gamma^u(G,\delta)= 1-\frac{(n-1)^2}{4n^2}.
\] 
\end{proposition}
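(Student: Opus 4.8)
The plan is to reverse-engineer a game from the chain of inequalities in the proof of Theorem~\ref{thm:main}, choosing the data so that every inequality in that chain is tight (or, where exact tightness is impossible, asymptotically tight along a one-parameter family). Concretely, I would use two markets: a competitive market, call it market~$2$, served by all $n$ firms, and a monopoly market, market~$1$, served only by firm~$a$, with the positive price shock placed entirely on market~$1$. The $n-1$ firms $j\neq a$ get linear cost functions $c_j(q)=c_j\,q$; market~$1$ gets a (near-)flat inverse demand, i.e.\ a very small slope $r_1$; and firm~$a$ gets a mildly convex cost $c_a$, strong enough in curvature to transmit the shock from market~$1$ to market~$2$ but ``almost linear'' on the relevant quantity range.

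First I would read off the target configuration from the end of the proof of Theorem~\ref{thm:main}: the worst case wants $x_{a,1}=0$ in the pre-shock equilibrium (so that the $M^+$-contribution in~\eqref{eq:estim1} vanishes), the competitors' first-order conditions on market~$2$ to bind both before and after the shock (automatic once $x_{j,2},y_{j,2}>0$, which together with linearity of $c_j$ turns~\eqref{eq:profitablej} into an equality), all $n-1$ competitors to strictly raise their market-$2$ quantity, and, decisively, $y_{a,2}=\tfrac{n+1}{2n}\,x_{a,2}$ --- the minimizer of the terminal quadratic $t\mapsto(\tfrac{1}{n}x_{a,2}-t)(x_{a,2}-t)/x_{a,2}^2$, whose value is exactly $-\tfrac{(n-1)^2}{4n^2}$. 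Then I would compute the two equilibria in closed form: each firm is governed by its first-order conditions $p_m(q_m)-r_m q_{i,m}=c_i'(q_i)$ on served markets carrying positive quantity, so with linear-cost competitors this is a symmetric Cournot sub-problem on market~$2$ parametrized by firm~$a$'s quantity there, while firm~$a$'s own problem links the two markets through $c_a'$. Solving the resulting small linear systems yields $x$ and $y$ as explicit functions of the price/cost parameters and of $\delta$; I would then form $u_a^\delta(y)/u_a(x)$ and tune the free parameters --- chiefly the ratio of $\delta$ to the curvature of $c_a$, and the magnitude of $r_1$ --- so that this ratio equals, or along the family converges to, $1-\tfrac{(n-1)^2}{4n^2}$. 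Finally I would check that the minimum in $\gamma^u(G,\delta)=\min_i u_i^\delta(y)/u_i(x)$ is attained at firm~$a$: by Lemmas~\ref{lem:relationship1}--\ref{lem:relationship2} the total quantity on market~$2$ does not rise, so its price does not fall, every competitor produces weakly more, and --- having linear costs and facing no shock on its own market --- no competitor can lose profit.

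The main obstacle is the tension between the two jobs of $c_a$. Strict convexity of $c_a$ is precisely the mechanism coupling market~$1$ to market~$2$ --- with a linear $c_a$ the firm's problem separates across markets and the paradox disappears entirely --- yet every convexity step in the proof of Theorem~\ref{thm:main} (the passages using~\eqref{eq:estim1} and~\eqref{eq:costcon1}, i.e.\ $c_a(q)\le c_a'(q)\,q$ and $c_a'(y_a)(y_a-x_a)\ge c_a(y_a)-c_a(x_a)$) is tight only for linear $c_a$, and the step~\eqref{eq:pricecost} feeding~\eqref{eq:mpluscombined} is tight only when $r_1=0$. Reconciling these forces a limiting family in which the curvature of $c_a$ vanishes while $\delta$ (and, if needed, $p_1$) is scaled up at a matching rate so that the quantity substituted away from market~$2$ stays bounded away from $0$; showing that all the remaining equilibrium inequalities stay asymptotically tight along this family, and that no quantity turns negative along the way, is the delicate bookkeeping. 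The other place that needs care --- and the reason the bound involves $n$ rather than just the $n=2$ Bulow instance --- is matching the factor $\tfrac{1}{n}$ in~\eqref{eq:pricemarket2} exactly, which requires all $n-1$ competitors to react on market~$2$ with equality in their first-order conditions simultaneously.
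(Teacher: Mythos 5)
You have correctly reverse-engineered the shape of the extremal instance --- two markets, a flat zero-priced monopoly market for firm $a$, one affine oligopoly market served by all firms, linear-cost competitors, the shock confined to market $1$, and the target $y_{a,2}=\tfrac{n+1}{2n}x_{a,2}$ minimizing the terminal quadratic --- and this matches the paper's construction (there $p_2(q)=2-q$, $r_1=0$, $x_{a,2}=\tfrac{2}{n+1}$, $y_{a,2}=\tfrac1n$, $\delta_1=\tfrac{n-1}{n^2}$). The genuine gap is in how you resolve the ``two jobs of $c_a$''. Your proposed limiting family, in which the curvature of $c_a$ vanishes while $\delta$ is scaled so that the substituted quantity stays bounded away from zero, goes in the wrong direction and in fact destroys the paradox: with a flat monopoly market, firm $a$'s post-shock first-order condition forces $c_a'(y_a)=p_1+\delta$, so with curvature $\varepsilon\to0$ and $\delta$ bounded away from $0$ one gets $y_a-x_a\approx\delta/\varepsilon\to\infty$, and the firm collects an unbounded producer surplus on market $1$ (it sells $y_{a,1}\to\infty$ units at price $p_1+\delta$ while its marginal cost over the added range averages only about half that); the profit ratio then exceeds $1$ rather than approaching $1-\tfrac{(n-1)^2}{4n^2}$. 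Scaling $\delta$ down with $\varepsilon$ instead kills the substitution effect, and retreating to $r_1>0$ breaks tightness of \eqref{eq:pricecost}.

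The correct extremal cost is the opposite limit: the paper gives firm $a$ cost $0$ up to the capacity $\tfrac{2}{n+1}$ and prohibitively high cost beyond. This makes every convexity step in the proof of Theorem~\ref{thm:main} tight \emph{exactly}, not asymptotically --- the step at \eqref{eq:estim1} because $c_a\equiv0$ on the feasible range, and \eqref{eq:costcon1} because $y_a=x_a$, i.e.\ the shock induces pure reallocation with no output expansion --- and it lets one write down both equilibria in closed form and verify $u_a^\delta(y)/u_a(x)=\tfrac{(3n-1)(n+1)}{4n^2}=1-\tfrac{(n-1)^2}{4n^2}$ as an equality, which is what the proposition asserts. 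Your plan, even with the bookkeeping completed, would at best deliver convergence to the bound along a family rather than its attainment. (If the concern is that a hard capacity kink violates differentiability in Assumption~\ref{ass2}, the right smoothing is a steeply increasing marginal cost above the cap, i.e.\ curvature tending to $+\infty$ there --- again the opposite of your vanishing-curvature family.) You should also note that the paper's choice of zero costs and zero initial price $p_1$ makes verifying $\min_i u_i^\delta(y)/u_i(x)$ is attained at $a$ immediate, since the competitors' profits visibly rise.
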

\proof
 The instance has two markets $M=\{1,2\}$ and there are $n$ firms. All firms serve market $2$ while market $1$ is only served by some firm $a\in N$.
 We fix the price on market $1$ before the price shock to $0$, i.e. $p_1=0$ and $r_1=0$. On market $2$ the price is $p_2(q_2)=2-q_2$, where $q_2$ is the total quantity sold in market 2. The cost of firm $a$ for any total quantity $q_a=q_{a,1}+q_{a,2}$ is $0$ if $q_a\leq \frac{2}{n+1}$; for any larger quantity the cost is prohibitively high. Note that although we treat the cost function
of firm $a$ as a (non-differentiable) step function, we can also use differentiable
 and convex functions approximating this step
 function within any precision leading to the same bound (up to the selected precision). For firms $i\neq a$, the cost is always $0$.

 The Cournot equilibrium $x$ of this game can be found through convex optimization. In the equilibrium, no quantity is sold on market $1$ and on market $2$, $x_{a,2}=x_{i,2}=\frac{2}{n+1}$. Firm $a$'s profit is $\ut_a(x)=(2-n\frac{2}{n+1})\frac{2}{n+1}=\frac{4}{(n+1)^2}$.

 A price shock that increases the price on market $1$ to $\frac{n-1}{n^2}$ leads firm $a$ to shift to market $1$. In the new equilibrium $y$, $y_{a,1}= \frac{n-1}{n(n+1)}$, $y_{a,2}=\frac{1}{n}$, and $y_{i,2}=\frac{2n-1}{n^2}$. The profit of firm $a$ is
\begin{align*}
 \ut_a(y)&=\left(2-\frac{1}{n}-(n-1)\frac{2n-1}{n^2}\right)\frac{1}{n}+ \frac{n-1}{n^2} \frac{n-1}{n(n+1)} =\frac{3n-1}{n^2(n+1)}.\\
\end{align*}
 Then, the ratio of profit before and after the price shock is
\begin{align*}
  \gamma^u(G,\delta) =\frac{\ut_a(y)}{\ut_a(x)}&=\frac{(3n-1)(n+1)}{4n^2}=1-\frac{(n-1)^2}{4n^2}.
\end{align*}
\endproof

\begin{remark}
Note that this lower bound is quite generic in the sense that such an instance can be constructed for any price function on market 2 and any linear cost function for competitors $i\neq a$. In general, the profit loss of a firm can be large when it has a strongly convex cost function, such that a positive price shock in one market causes it to decrease quantity in another market, and when this is met by competitors with linear (or not ``too convex'') cost functions.
\end{remark}

\subsection{Non-convex Cost}
If we abandon Assumption~\ref{ass2}, i.e. allow non-convex cost functions, we possibly lose uniqueness of equilibria and we would have to redefine our objective function, e.g. involving equilibrium selection.\footnote{Cournot equilibria continue to exist for non-convex costs if inverse demand functions satisfy
rather mild assumptions, see, e.g., 
\citet{Novshek85} and 
\citet{Amir96} and 
\citet{Roberts76}.}
Moreover, one
can easily construct examples where a positive price shock completely eliminates the profit of firm a firm in all equilibria. If e.g. fixed costs are allowed, in the example of 
\citet{Bulow85} one can set the fixed cost of the monopolist equal to their revenue after the price shock using the fact that fixed costs do not change the equilibria of the game as long as nonnegative profits are guaranteed. Similar examples are possible if we mix between economies of scale and diseconomies of scale among firms' cost technologies.

\subsection{Concave Inverse Demand Functions}
Relaxing Assumption~\ref{ass1} toward concave prices reveals another counterintuitive phenomenon: Very small price shocks may decrease the profit of a firm by an arbitrary amount. 
Consider the class $\HA\supseteq\G$ of games that allows for concave inverse demand functions.
We obtain the following value for $\gamma^u(\HA,\Delta_{\HA})$.

\begin{proposition}
For any $k\geq 4$, there is a game with only two markets and concave price functions such that the profit ratio of one of the firms before and after a positive price shock is less than $\frac{2}{k}$.
Thus, $\gamma^u(\HA,\Delta_{\HA})=0$.
\end{proposition}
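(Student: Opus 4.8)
The plan is to prove the statement by exhibiting, for each $k\geq 4$, a single game $G\in\HA$ together with a single positive price shock $\delta$ for which $\ut_a^\delta(y)/\ut_a(x)<\tfrac{2}{k}$ for one firm $a$, where $x$ and $y$ are the equilibria of $G$ and $G(\delta)$. Since $c_i(0)=0$, every firm can guarantee itself nonnegative profit, so equilibrium profits are nonnegative and $\gamma^u(\HA,\Delta_\HA)\geq 0$; combined with $\gamma^u(G,\delta)=\min_{i}\ut_i^\delta(y)/\ut_i(x)\leq \ut_a^\delta(y)/\ut_a(x)<\tfrac{2}{k}$ for every $k$, letting $k\to\infty$ then forces $\gamma^u(\HA,\Delta_\HA)=0$.

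The game I would use is a concave analogue of the instance in Proposition~\ref{lem:lb_profit}. It has two markets: firm $a$ is the sole seller on market~$1$, where the price is the constant $0$ before the shock (a constant price function is affine and hence already in $\HA$); all firms---$a$ included---serve market~$2$, whose inverse demand $p_2$ is piecewise linear and concave, gently decreasing with slope $-\ep$ on an initial interval $[0,T]$ and steeply decreasing beyond $T$. Firm $a$ carries essentially a capacity slightly above $T$ (equivalently, a cost that vanishes up to $T$ and is sharply convex afterwards, so as to respect Assumption~\ref{ass2}); the other firms have zero marginal cost. The parameters---$T$, the steep slope, the capacity---are calibrated as functions of $k$ (a natural scaling has $T=\Theta(\sqrt k)$) so that in the equilibrium $x$ before the shock the total quantity on market~$2$ equals the breakpoint $T$, firm $a$ supplies essentially all of it at a price of order $\ep T$ and thus earns $\ut_a(x)=\Theta(1)$, the competitors supply only negligible quantities, and---decisively---firm $a$'s marginal revenue on market~$2$ in $x$ equals $0$. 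The equilibrium $x$ is therefore a knife edge: firm $a$ is exactly indifferent to moving a marginal unit off market~$2$.

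I would then apply an arbitrarily small shock $\delta>0$ to market~$1$ (so $\delta\to 0$ as $k\to\infty$) and show that it tips the equilibrium macroscopically. Because firm $a$'s marginal revenue on market~$2$ in $x$ is $0<\delta$, firm $a$ strictly prefers to shift quantity onto market~$1$; as soon as the total quantity on market~$2$ drops below $T$ the competitors find themselves on the gentle branch of $p_2$, where the best response of a zero-cost firm is of order $k\,p_2(q_2)$ and hence large, so they crowd into market~$2$. In the resulting equilibrium $y$ firm $a$ is squeezed down to a competitive quantity $y_{a,2}$ so small that its market-$2$ revenue $p_2^\delta(y)\,y_{a,2}$ is only an $O(1/k)$ fraction of $\ut_a(x)$, while its compensation on market~$1$, namely $\delta\,y_{a,1}$, is---because $\delta$ was chosen this small---likewise an $O(1/k)$ fraction of $\ut_a(x)$; adding the two and subtracting the nonnegative cost gives $\ut_a^\delta(y)<\tfrac{2}{k}\,\ut_a(x)$ for the calibrated parameters.

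The technical core is to verify that $x$ and $y$ are genuine equilibria, i.e.\ that no firm has a profitable unilateral deviation. Two points need care: (i) the competitors' large quantities in $y$ must be mutually best-responding, which reduces to a rescaled Cournot fixed-point computation on the gentle branch of $p_2$; and (ii) firm $a$ must not profit from re-expanding on market~$2$ after the shock, which holds precisely because the competitors' now-large quantities have already driven its marginal revenue there down to $\delta$. A further subtlety is that relaxing Assumption~\ref{ass1} forfeits the uniqueness statement of Lemma~\ref{lem:multiunique}, so the construction should be arranged so that the equilibria before and after the shock remain unique (or the statement read with worst-case equilibrium selection), and the non-differentiability of $p_2$ at $T$ must be handled with one-sided marginal conditions. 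I expect the main obstacle to be exactly this calibration---choosing the parameters so that, simultaneously, $x$ is a knife edge with $\ut_a(x)=\Theta(1)$, an infinitesimal shock produces a macroscopic reallocation, the ratio bound $\tfrac{2}{k}$ holds, and every equilibrium condition is satisfied.
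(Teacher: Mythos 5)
Your overall strategy---a two-market instance with a kinked concave inverse demand on the oligopoly market, a capacity-constrained monopolist $a$, and a small shock on market~$1$ triggering a large reallocation---is in the same spirit as the paper's construction, but the specific instance you describe has a genuine equilibrium-verification gap, and the part you defer (``calibration'') is where the actual proof lives. Concretely: you give the competitors zero marginal cost and place them at negligible quantities in the pre-shock equilibrium $x$, while the market-$2$ price at $x$ is of order $T/k>0$ (it must be positive, since you need $u_a(x)=p_2(x_2)\,x_{a,2}=\Theta(1)$). But then each competitor $j$ has marginal revenue $\pi_{j,2}(x)=p_2(x_2)+p_2'(x_2)\,x_{j,2}\approx p_2(x_2)>0=c_j'(x_j)$ and strictly profits from expanding, so $x$ is not a Cournot equilibrium. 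The paper sidesteps exactly this by giving the competitors linear cost $c_j(q)=q$ with slope equal to the pre-shock price $p_2(1)=1$, so that $x_{j,2}=0$ is a best response. Your post-shock story has a related internal tension: you want the competitors to flood market~$2$ with best responses of order $k\,p_2(q_2)$ (which pushes total quantity up and the price toward zero) while simultaneously claiming the total quantity drops below the breakpoint $T$ onto the gentle branch; it is not clear both can hold, and nothing in the sketch pins down $y$.

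For comparison, the paper's proof is a short, fully explicit verification: $p_1\equiv 0$, $c_a=0$ up to capacity $1$ and prohibitive beyond, $c_j(q)=q$ for $j\neq a$, and $p_2$ concave with $p_2(1)=1$, $p_2'(1)=-1$, $p_2(1+\tfrac{1}{k})=1-\tfrac{2}{k}$, $p_2'(1+\tfrac{1}{k})=-k$. Pre-shock: $x_{a,2}=1$, $x_{j,2}=0$, $u_a(x)=1$. After the shock $\delta_1=\tfrac{1}{k}$, $\delta_2=\tfrac{3}{k}$ the new equilibrium is $y_{a,1}=1-\tfrac{1}{k}$, $y_{a,2}=\tfrac{1}{k}$, $y_{j,2}=\tfrac{1}{k^2}$, giving $u_a^\delta(y)=\tfrac{2}{k}$; all first-order conditions are checked in two lines. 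Note the mechanism is the opposite of yours: the market-$2$ price \emph{rises} to $1+\tfrac{1}{k}$, but the slope at the new total quantity is $-k$, so firm $a$'s marginal revenue collapses and it withdraws to $\tfrac{1}{k}$ units while the competitors stay tiny. If you want to salvage your variant, set the competitors' marginal cost equal to the pre-shock price and carry out the fixed-point computation for $y$ explicitly; as written, the proposal does not establish the claim.
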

\proof
 Fix some $k\geq 4$. We construct a game $G\in\HA$ to fulfill the above claim.
 The firm whose profit ratio we observe, denoted by $a$, has cost $c_a(q_a)=0$ for quantities $q_a\leq 1$ and prohibitively high cost for larger quantities. All other firms $i\neq a$ have cost $c_i(q_i)= q_i$ for any quantity.
 
 Market $1$ is only served by firm $a$ and has constant price $p_1(q_1)\equiv 0$. Market $2$ is served by all firms and has a concave price function satisfying $p_2(1)=1$ with $p_2'(1)=-1$ and $p_2(1+\frac{1}{k})=1-\frac{2}{k}$ with $p_2'(1+\frac{1}{k})=-k$.

 The initial equilibrium $x$ of this game is $x_{a,1}=0$, $x_{a,2}=1$ and $x_{j,2}=0$ for all $j\neq a$. To verify this, observe that marginal revenue and cost of firm $a$ are all equal as $\pi_{a,1}(x)=0$, $\pi_{a,2}(x)=1-1=0$ and $c_a'(x)=0$ as well as for competitors $j\neq a$ we have $\pi_{j,2}(x)=1=c_j'(x)$.
 
 Let the price shock be $\delta_1=\frac{1}{k}$ and $\delta_2=\frac{3}{k}$. The new equilibrium is $y_{a,1}=1-\frac{1}{k}$, $y_{a,2}=\frac{1}{k}$ and $y_{j,2}=\frac{1}{k^2}$ for all competitors $j\neq a$. We again verify $\pi^\delta_{a,1}(y)=\frac{1}{k}$, $\pi^\delta_{a,2}(y)=1+\frac{1}{k}-k\frac{1}{k}=\frac{1}{k}$ which are equal and greater than $0$. For competitors $j\neq a$, $\pi^\delta_{j,2}(y)=1+\frac{1}{k}-k\frac{1}{k^2}=1=c_j'(y)$.
 
 We calculate the profit of firm $a$ in both equilibria: $u_a(x)=1$ and thus
 \begin{align*}
  \gamma(G,\delta)&=u_a^\delta(y) =p^\delta_1(y)y_{a,1}+p_2^\delta(y)y_{a,2}-c_a(y)\\
  &=\frac{1}{k}(1-\frac{1}{k})+(1+\frac{1}{k})\frac{1}{k}=\frac{2}{k}. 
 \end{align*}
\endproof

\section{Effect of Price Shocks on Aggregates}\label{sec:aggregate}
Theorem~\ref{thm:main} shows that an individual firm can lose no more than $25\%$ of its profit as a result of a positive price shock. The lower bound, however, had the property
that one firm loses but all competitors gained in their total profits.
In this section, we study effects of price shocks on \emph{aggregate} measures: the
total profit
and the social welfare.

\subsection{Total Profit}
Our first result showed that each firm may not lose more than  $25\%$
of current profits. By the definition of $\gamma^u(G,\delta)$ it follows that the same holds true for total firm's profits, that is, $\gamma^U(G,\delta)\geq \gamma^u(G,\delta)\geq 3/4$
for any game $G$.
More interestingly, we show an instance, where this loss is actually attained,
and, thus, $\gamma^U(G,\delta)=\gamma^u(G,\delta)$.
\begin{proposition}\label{lem:welfare}
There is a game $G\in\G$ with $n$ firms where a positive price shock $\delta$ decreases the equilibrium total profit by a $\frac{(n-1)^2}{4(n^2+n-1)}$ fraction of the original total profit, that is, by almost $25\%$ for instances with many firms, that is
\[\gamma^U(G,\delta) \leq 1-\frac{(n-1)^2}{4(n^2+n-1)}\rightarrow_{n\rightarrow\infty} \frac{3}{4}.
\]
\end{proposition}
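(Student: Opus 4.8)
The plan is to exhibit an explicit $n$-firm game, obtained by tuning the lower-bound construction of Proposition~\ref{lem:lb_profit}. The key modification is to give the competitors marginal cost equal to the pre-shock price on the competitive market, so that their initial quantity --- and hence their initial profit --- is zero; otherwise, as one sees by taking the competitor cost to be zero as in Proposition~\ref{lem:lb_profit}, the competitors' gain from the shock overcompensates firm $a$'s loss and aggregate welfare actually rises. Concretely I would take two markets $M=\{1,2\}$, with market~$1$ served only by firm $a$ at the constant price $p_1\equiv 0$ (so $r_1=0$) and market~$2$ served by all firms with $p_2(q_2)=2-q_2$; firm $a$ has cost $0$ on $[0,1]$ and prohibitively high cost beyond, while every competitor $i\neq a$ has the linear cost $c_i(q_i)=q_i$. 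The price shock applied is $\delta_1=\frac{n^2-1}{2(n^2+n-1)}$ on market~$1$ and $\delta_2=0$. As in Proposition~\ref{lem:lb_profit}, the degenerate data at the endpoints ($p_1=0$ and the kink of $c_a$) should be read as a limit of instances in $\G$.

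I would first pin down the pre-shock equilibrium $x$: market~$1$ pays nothing, so firm $a$ devotes its whole capacity to market~$2$, giving $x_{a,2}=1$ and $p_2(x)=1$; each competitor's market-$2$ marginal revenue at quantity $0$ equals its marginal cost $1$, so $x_{i,2}=0$; uniqueness is Lemma~\ref{lem:multiunique}, whence $U(x)=u_a(x)=1$. I would then solve the first-order conditions of $G(\delta)$: firm $a$'s marginal revenue must agree on the two markets and, being positive at capacity, forces $y_a=1$, while each competitor's market-$2$ marginal revenue must equal $1$. This linear system yields $y_{a,1}=\frac{n(n-1)}{2(n^2+n-1)}$, $y_{a,2}=\frac{n^2+3n-2}{2(n^2+n-1)}$ and $y_{i,2}=\frac{n-1}{2(n^2+n-1)}$; I would check these are nonnegative and satisfy every best-response condition --- consistently with Lemmas~\ref{lem:relationship1} and~\ref{lem:relationship2}, the market-$2$ price rises and firm $a$'s total output is unchanged --- so that by Lemma~\ref{lem:multiunique} they form the equilibrium of $G(\delta)$. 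It then remains to compute $U^\delta(y)=\bigl(\delta_1 y_{a,1}+p_2^\delta(y)\,y_{a,2}\bigr)+(n-1)\bigl(p_2^\delta(y)-1\bigr)y_{i,2}$ and $\gamma^U(G,\delta)=U^\delta(y)/U(x)=U^\delta(y)$, which should come out to $\frac{3n^2+6n-5}{4(n^2+n-1)}=1-\frac{(n-1)^2}{4(n^2+n-1)}$.

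The substantive step is choosing $\delta_1$. I would treat the fraction $\tau\in(0,1)$ of its market-$2$ quantity that firm $a$ retains after the shock (equivalently $\delta_1$) as a free parameter, express $\gamma^U$ as a quadratic polynomial in $\tau$ (the denominator $U(x)=1$ being constant), and observe that, over $\tau$ with all other data fixed, the welfare loss is maximized at $\tau=\frac{n^2+3n-2}{2(n^2+n-1)}$ with maximal value exactly $\frac{(n-1)^2}{4(n^2+n-1)}$; this is what lets aggregate welfare be driven almost as far down as the single-firm bound of Theorem~\ref{thm:main}. Notably this optimal $\tau$ is \emph{not} the value $\frac{n+1}{2n}$ that minimizes firm $a$'s own relative loss, which is why $\gamma^U(G,\delta)$ differs here from $\gamma^u(G,\delta)$ --- the gaining competitors dilute firm $a$'s loss. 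The closing algebra collapses to the polynomial identity $(n^2+6n-3)(n^2+n-1)=(n^2+3n-2)^2+(n-1)^3$, both sides expanding to $n^4+7n^3+2n^2-9n+3$; carrying out this verification, together with the routine bookkeeping certifying that the (boundary) profiles above really are equilibria, is the main --- if mechanical --- obstacle.
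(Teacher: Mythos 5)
Your proposal is correct and is essentially the paper's own proof: the identical two-market instance (constant zero price on firm $a$'s monopoly market, $p_2(q_2)=2-q_2$, firm $a$ capped at total quantity $1$ with zero cost, competitors with unit marginal cost), the identical shock $\delta_1=\frac{n^2-1}{2(n^2+n-1)}$, and the identical equilibria and welfare values $U(x)=1$, $U^\delta(y)=\frac{3n^2+6n-5}{4(n^2+n-1)}$. The additional material you supply --- explicit verification of the first-order conditions and the optimization over $\tau$ explaining the choice of $\delta_1$ --- only elaborates on what the paper states without proof.
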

\proof
The game is similar to that from that proof of Proposition~\ref{lem:lb_profit}, except that firm $a$ can produce a quantity of $q_a\leq 1$ at cost $0$ and its competitors $i\neq a$ have a per-unit cost of 1, i.e. the cost is $c_i(q_i)=q_i$.

 The Cournot equilibrium $x$ of this game can be found through convex optimization. In the equilibrium, no quantity is sold on market $1$ and on market $2$, $x_{a,2}=1$ and $x_{i,2}=0$ for all $i\neq a$. The equilibrium total profit is $U(x)=1$.

 A price shock that increases the price on market $1$ to $\delta=\frac{n^2-1}{2(n^2+n-1)}$ leads firm $a$ to shift to market $1$. In the new equilibrium $y$, $y_{a,1}= \frac{n^2-n}{2(n^2+n-1)}$, $y_{a,2}=\frac{n^2+3n-2}{2(n^2+n-1)}$, and $y_{i,2}=\frac{n-1}{2(n^2+n-1)}$ for all $i\neq a$. The new equilibrium total profit is
\begin{align*}
 U^\delta(y)&=\delta y_{a,1} + (2-(y_{a,2}+\sum_{i\neq a}y_{i,2}))(y_{a,2}+\sum_{i\neq a}y_{i,2})-\sum_{i\neq a}y_{i,2}\\
  &=\frac{3n^2+6n-5}{4(n^2+n-1)}.
\end{align*}
\endproof

\subsection{Aggregate Social Welfare}

We now consider the effect of price shocks on
social welfare defined as
\begin{align}\label{eq:surplusdef}
S(q)= \sum_{m\in M}\int_{0}^{q_m}p_m(z)dz-\sum_{i\in N}c_i(q_i)=\sum_{m\in M}\left(s_m q_m-\frac{r_mq_{m}^{2}}{2}\right)-\sum_{i\in N}c_i(q_i).
\end{align}
The first term in the above definition of $S(q)$ measures the value that the buyers in the market have of the goods while the second term simply sums up total production cost. Social welfare
has been considered before, among others, by 
\citet{AndersonR03}, 
\citet{Ushio85} and 
\citet{TsitsiklisX13}.
For a given game $G$, we want to bound the ratio
$\gamma^S(G,\delta)=\frac{S^\delta(y)}{S(x)}$.

\begin{theorem}\label{thm:surplus}
Given a game $G$, a positive price shock $\delta$ can decrease the social welfare by at most a factor $\frac{1}{6}$, that is,
\[ \gamma^S(G,\delta)\geq \frac{5}{6}.
\]
\end{theorem}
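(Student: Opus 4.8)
The plan is to prove the bound directly, by showing $S^\delta(y)\ge\tfrac56 S(x)$, i.e. that the surplus loss $S(x)-S^\delta(y)$ is at most $\tfrac16 S(x)$. The trick I would use is to route everything through the single auxiliary quantity
\[
 A:=\sum_{i\in N}\sum_{m\in M} r_m x_{i,m}^2 ,
\]
establishing simultaneously that $A$ bounds the loss from above (loss $\le \tfrac14 A$) and that $A$ is bounded by $S(x)$ from below ($S(x)\ge \tfrac32 A$); dividing then gives the ratio $\tfrac{A/4}{3A/2}=\tfrac16$.

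For the first half I would expand the surplus difference market by market using the antiderivative $\int_0^{q}p_m(z)\,dz=p_m(q)q+\tfrac{r_m}{2}q^2$, which gives
\[
 S^\delta(y)-S(x)=\sum_{m\in M}\Big[\delta_m x_m+(y_m-x_m)p_m^\delta(y_m)+\tfrac{r_m}{2}(y_m-x_m)^2\Big]-\sum_{i\in N}\big(c_i(y_i)-c_i(x_i)\big).
\]
Convexity of $c_i$ (with $c_i(0)=0$) gives $c_i(y_i)-c_i(x_i)\le c_i'(y_i)(y_i-x_i)$; substituting this and regrouping the linear terms over $(i,m)$ pairs (using $y_m-x_m=\sum_i(y_{i,m}-x_{i,m})$ and $y_i-x_i=\sum_m(y_{i,m}-x_{i,m})$) leaves $\sum_{i,m}\big(p_m^\delta(y_m)-c_i'(y_i)\big)(y_{i,m}-x_{i,m})$ plus the nonnegative terms $\delta_m x_m$ and $\tfrac{r_m}{2}(y_m-x_m)^2$. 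The key local estimate is that each $(i,m)$-term is at least $r_m y_{i,m}(y_{i,m}-x_{i,m})$: when $y_{i,m}>0$ the equilibrium first-order condition makes $p_m^\delta(y_m)-c_i'(y_i)=r_m y_{i,m}$ exactly, and when $y_{i,m}=0$ the no-profitable-entry condition $p_m^\delta(y_m)\le c_i'(y_i)$ makes the term nonnegative. Completing the square, $r_m y_{i,m}(y_{i,m}-x_{i,m})=r_m\big(y_{i,m}-\tfrac12 x_{i,m}\big)^2-\tfrac{r_m}{4}x_{i,m}^2\ge-\tfrac{r_m}{4}x_{i,m}^2$; dropping the remaining nonnegative terms yields $S(x)-S^\delta(y)\le\tfrac14 A$.

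For the second half I would lower-bound $S(x)$. Writing $S(x)=\sum_m p_m(x_m)x_m+\sum_m\tfrac{r_m}{2}x_m^2-\sum_i c_i(x_i)$ and substituting the equilibrium condition $p_m(x_m)=c_i'(x_i)+r_m x_{i,m}$ on every served market gives $\sum_m p_m(x_m)x_{i,m}=c_i'(x_i)x_i+\sum_m r_m x_{i,m}^2$, hence $S(x)=\sum_i\big(c_i'(x_i)x_i-c_i(x_i)\big)+A+\sum_m\tfrac{r_m}{2}x_m^2$. Convexity gives $c_i'(x_i)x_i-c_i(x_i)\ge0$, and $\sum_m\tfrac{r_m}{2}x_m^2=\tfrac12\sum_m r_m(\sum_i x_{i,m})^2\ge\tfrac12\sum_m r_m\sum_i x_{i,m}^2=\tfrac12 A$, so $S(x)\ge\tfrac32 A$. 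Combining, $\tfrac{S(x)-S^\delta(y)}{S(x)}\le\tfrac{A/4}{3A/2}=\tfrac16$, i.e. $\gamma^S(G,\delta)\ge\tfrac56$; the degenerate case $A=0$ forces $x\equiv0$, so $S(x)=0\le S^\delta(y)$ and $\gamma^S=1$ by convention (and uniqueness of equilibria, Lemma~\ref{lem:multiunique}, makes the ratio well defined throughout).

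The step I expect to be the main obstacle is exactly the choice of the intermediate quantity $A$: loss, welfare, and consumer surplus must all be compared through the same quadratic aggregate $\sum_{i,m} r_m x_{i,m}^2$, and one has to notice both that the post-shock quantities $y$ degrade this aggregate by at most a factor $\tfrac14$ (after completing the square on the per-pair estimate $(p_m^\delta(y_m)-c_i'(y_i))(y_{i,m}-x_{i,m})\ge r_m y_{i,m}(y_{i,m}-x_{i,m})$) and that the equilibrium conditions pin firms' pre-shock revenues down tightly enough to force $S(x)\ge\tfrac32 A$. Unlike Theorem~\ref{thm:main}, where the argument localizes to a single worst market, the surplus estimate is inherently global, so converting the clean per-pair bound into the final constant is the crux; verifying that a matching example approaches $\tfrac56$ would then be handled separately.
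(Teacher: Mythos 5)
Your proof is correct and follows essentially the same route as the paper: you bound the surplus loss by $\tfrac14\sum_{i,m}r_m x_{i,m}^2$ via the post-shock first-order conditions and the completion of the square $y_{i,m}(x_{i,m}-y_{i,m})\le\tfrac{x_{i,m}^2}{4}$, and you lower-bound $S(x)\ge\tfrac32\sum_{i,m}r_m x_{i,m}^2$ exactly as in the paper's Lemma~\ref{lem:surplus}. The only cosmetic difference is that you invoke the pointwise KKT conditions (equality when $y_{i,m}>0$, no-entry inequality when $y_{i,m}=0$) directly, where the paper packages the same information as the variational inequality of Lemma~\ref{lem:variational}.
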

Before we prove the theorem, 
we characterize $y$ with the following \emph{variational inequality}.
Variational inequalities have been used before, e.g., in the context of 
characterizing equilibria (cf.~\citet{dafermos1980traffic,Haurie85})
and for
bounding the price of anarchy
(cf.~\citet{CorSS04,CCS06,CorreaSM08,Harks_Miller2011,Harks:stack2011,Rough2002,RS11a}).
\begin{lemma}\label{lem:variational}
Let $y$ be the equilibrium for the game with price shock $\delta_m\geq 0, m\in M$
and let $x$ be the original equilibrium.
Then, for all $i\in N$  it holds
\begin{align}\label{ineq:var} \sum_{m\in M_i}\big(s_m+\delta_m-r_m y_m-r_my_{i,m}-c'_i(y_i)\big)\big(x_{i,m}-y_{i,m}\big)\leq 0.\end{align}
\end{lemma}
\proof
For every firm $i$, given the equilibrium quantities $y_{-i}$ of its competitors, the problem
\[ \max_{q_{i,m}\geq 0, m\in M_i}u_i(q_i, y_{-i})\]
 is a convex program.
Thus, at an optimal solution $(y_{i,m})_{m\in M_i}$, the gradient $\nabla u_i(y)$
only decreases along any feasible direction. In particular, $\big(x_{i,m}-y_{i,m}\big)_{m\in M_i}$
is a feasible direction.
\endproof

We now bound the welfare gained at $x$.
\begin{lemma}\label{lem:surplus}
\begin{align}\label{eq:variational}
S(x)\geq 
 \sum_{i\in N}\sum_{m\in M}\frac{3}{2}r_m x_{i,m}^{2}.
\end{align}
\end{lemma}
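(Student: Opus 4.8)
The plan is to run the first-order equilibrium conditions backwards through the definition of $S$, using convexity of the cost functions so that the ``value'' terms $p_m x_m$ cancel against the cost terms and only the quadratic slope terms survive.

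First I would expand $S(x)=\sum_{m\in M}\bigl(p_m x_m-\tfrac12 r_m x_m^2\bigr)-\sum_{i\in N}c_i(x_i)$ as in \eqref{eq:surplusdef}. Since each $c_i$ is convex with $c_i(0)=0$, the subgradient inequality at $x_i$ gives $0=c_i(0)\geq c_i(x_i)+c_i'(x_i)(0-x_i)$, i.e. $c_i(x_i)\leq c_i'(x_i)x_i$, so $-\sum_{i\in N}c_i(x_i)\geq-\sum_{i\in N}c_i'(x_i)x_i$. Next I would write $c_i'(x_i)x_i=\sum_{m\in M}c_i'(x_i)x_{i,m}$; on markets with $x_{i,m}=0$ the summand vanishes, and on markets with $x_{i,m}>0$ the equilibrium condition $\pi_{i,m}(x)=c_i'(x_i)$ lets me substitute $c_i'(x_i)=p_m-r_m x_m-r_m x_{i,m}$. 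Summing over all firms and using $\sum_{i\in N}x_{i,m}=x_m$ collapses $\sum_{i\in N}c_i'(x_i)x_i$ to $\sum_{m\in M}\bigl((p_m-r_m x_m)x_m\bigr)-\sum_{m\in M}\sum_{i\in N}r_m x_{i,m}^2=\sum_{m\in M}\bigl(p_m x_m-r_m x_m^2\bigr)-\sum_{m\in M}\sum_{i\in N}r_m x_{i,m}^2$.

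Plugging this back, the $p_m x_m$ terms cancel and the $-r_m x_m^2$ combines with the $-\tfrac12 r_m x_m^2$ from $S(x)$ to leave $S(x)\geq\sum_{m\in M}\tfrac12 r_m x_m^2+\sum_{m\in M}\sum_{i\in N}r_m x_{i,m}^2$. Finally, since the $x_{i,m}$ are nonnegative I would use $x_m^2=\bigl(\sum_{i\in N}x_{i,m}\bigr)^2\geq\sum_{i\in N}x_{i,m}^2$, which upgrades the first sum to $\tfrac12\sum_{m\in M}\sum_{i\in N}r_m x_{i,m}^2$ and yields $S(x)\geq\tfrac32\sum_{m\in M}\sum_{i\in N}r_m x_{i,m}^2$, exactly the claim of \eqref{eq:variational}.

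I do not expect a genuine obstacle here: the argument is a chain of inequalities and some bookkeeping. The only points needing care are that the terms with $x_{i,m}=0$ silently drop out of $\sum_{m\in M}c_i'(x_i)x_{i,m}$ (so the equilibrium first-order condition is only invoked on served markets), and that the last step uses $r_m>0$ and $x_{i,m}\geq0$. If anything, the ``hard'' part is seeing in advance that the estimate should come out with the clean constant $\tfrac32$ — this is precisely what makes Theorem~\ref{thm:surplus} reduce to the bound $\tfrac56$ downstream, once one has a matching estimate for $S^\delta(y)$ from below in terms of the same quantity $\sum_{m\in M}\sum_{i\in N}r_m x_{i,m}^2$ via Lemma~\ref{lem:variational}.
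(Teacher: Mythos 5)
Your proposal is correct and follows essentially the same route as the paper: bound $-\sum_i c_i(x_i)$ by $-\sum_i c_i'(x_i)x_i$ via convexity and $c_i(0)=0$, substitute the first-order condition $c_i'(x_i)=p_m-r_mx_m-r_mx_{i,m}$ on served markets, cancel the $p_mx_m$ terms, and finish with $x_m^2\geq\sum_{i\in N}x_{i,m}^2$. The bookkeeping and the final constant $\tfrac32$ match the paper's argument exactly.
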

\proof
Note that $x$ is an equilibrium for the unperturbed game.
In particular, the first order necessary optimality conditions
for each firm hold:
\begin{align}\label{eq:firstorder}
 s_m-r_m x_m-r_mx_{i,m}=c'_i(x_i) \text{ for all }x_{i,m}>0.
\end{align}
We combine this with the definition of the welfare~\eqref{eq:surplusdef},
\begin{align}
S(x)&=\sum_{m\in M}{\left(s_m x_m-\frac{r_mx_{m}^{2}}{2}\right)}-\sum_{i\in N}c_i(x_i) \notag\\
&\geq \sum_{m\in M}{\left(s_m x_m-\frac{r_mx_{m}^{2}}{2}\right)}-\sum_{i\in N}c'_i(x_i)x_i \label{ineq:convex}\\
&\overset{\eqref{eq:firstorder}}{=} \sum_{m\in M}{\left(s_m x_m-\frac{r_mx_{m}^{2}}{2}- (s_mx_m-r_m (x_m)^{2}-r_m \sum_{i\in N} (x_{i,m})^{2})\right)}\notag\\
&=\sum_{m\in M} \frac{r_mx_{m}^{2}}{2}+\sum_{i\in N}\sum_{m\in M}r_m x_{i,m}^{2}\geq \sum_{i\in N}\sum_{m\in M}\frac{3}{2}r_m x_{i,m}^{2}.\notag
\end{align}
Here \eqref{ineq:convex} follows by the convexity of $c_i$ and $c_i(0)=0$.
Finally, $x_m^2=\big(\sum_{i\in N}{x_{i,m}}\big)^2\geq \sum_{i\in N}{x_{i,m}^2}$.
\endproof

We now prove the theorem.
\proof{Proof of Theorem~\ref{thm:surplus}.}
We establish the difference between $S^\delta(y)$ and $S(x)$ through \eqref{eq:surplusdef} and the fact that $c_i(y_i)-c_i(x_i)\leq c_i'(y_i)(y_i-x_i)$ as the cost functions are convex and $y_i\geq x_i$ as shown in Lemma~\ref{lem:relationship2}.
\begin{align*}
S(x)&= S^\delta(y) + \sum_{m\in M}{\big(s_mx_m-(s_m+\delta_m)y_m -\frac{r_m}{2}(x_m^2-y_m^2)\big)}+\sum_{i\in N}{\big(c_i(y_i)-c_i(x_i)\big)}\\
&\leq S^\delta(y) + \sum_{m\in M}{\big((s_m+\delta_m)(x_m-y_m)-\frac{r_m}{2}(x_m^2-y_m^2)\big)}+\sum_{i\in N}{\big(c_i'(y_i)(y_i-x_i)\big)}.
\end{align*}
Subtracting the variational inequality~\eqref{ineq:var} summed up across all firms $i\in N$ allows to simplify the term.
\begin{align}
S(x)&\leq S^\delta(y) + \sum_{m\in M}{\big(-\frac{r_m}{2}(x_m^2+y_m^2)+r_my_mx_m+r_m \sum_{i\in N}{y_{i,m}(x_{i,m}-y_{i,m})}\big)}\notag\\
&=S^\delta(y) + \sum_{m\in M}{r_m\big(-\frac{1}{2}(x_m-y_m)^2+\sum_{i\in N}{y_{i,m}(x_{i,m}-y_{i,m})}\big)}\notag\\
&\leq S^\delta(y) + \sum_{m\in M}{r_m\sum_{i\in N}{\frac{x_{i,m}^2}{4}}}\leq S^\delta(y) + \frac{1}{6} S(x). \label{eq:binom}
\end{align}
Here, we used for the first inequality in~\eqref{eq:binom} that $(x_m-y_m)^2\geq 0$ and that $y_{i,m}(x_{i,m}-y_{i,m})\leq \frac{x_{i,m}^{2}}{4}$. The final step applies the result of Lemma~\ref{lem:surplus}.
\endproof

We can use the construction of Proposition \ref{lem:welfare}
to obtain a matching lower bound.

\begin{proposition}
 There is a game $G$ with many firms where a positive price shock decreases the 
social welfare by $1/6\approx16.6\%$, that is, 
\[\gamma^S(G,\delta)\ \overset{{n\rightarrow\infty}}{\rightarrow}\ \frac{5}{6}.\]
\end{proposition}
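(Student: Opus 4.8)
The plan is to reuse \emph{verbatim} the $n$-firm instance built in the proof of Proposition~\ref{lem:welfare}, but now evaluate the social surplus $S$ on its two equilibria instead of the welfare $U$. Recall that this game has two markets: market~$1$ is served only by firm~$a$ and has initial price $p_1\equiv 0$ (hence $r_1=0$), market~$2$ is served by all $n$ firms with price $p_2(q_2)=2-q_2$ (hence $r_2=1$), firm~$a$ produces up to one unit at zero cost, and every competitor has unit marginal cost. By Proposition~\ref{lem:welfare}, the original equilibrium is $x_{a,2}=1$, $x_{i,2}=0$ for $i\neq a$, and under the shock $\delta_1=\frac{n^2-1}{2(n^2+n-1)}$, $\delta_2=0$ the new equilibrium is $y_{a,1}=\frac{n^2-n}{2(n^2+n-1)}$, $y_{a,2}=\frac{n^2+3n-2}{2(n^2+n-1)}$, $y_{i,2}=\frac{n-1}{2(n^2+n-1)}$; these facts I will simply import.

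First I would evaluate $S(x)$ using the definition~\eqref{eq:surplusdef}. Since nothing is produced on market~$1$ and that market has zero price and zero slope, only market~$2$ contributes, so $S(x)=\int_0^{1}(2-z)\,dz=\frac{3}{2}$ (the competitors' cost term vanishes because $x_{i,2}=0$, and firm~$a$'s cost is $0$).

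Next I would evaluate $S^\delta(y)$. After the shock market~$1$ carries the constant price $\delta_1$, so its contribution to buyers' value is $\delta_1\,y_{a,1}$; market~$2$ is unchanged and contributes $\int_0^{y_2}(2-z)\,dz=2y_2-\frac12 y_2^2$ with $y_2=y_{a,2}+(n-1)y_{i,2}$; and the only positive production cost is that of the $n-1$ competitors, totalling $(n-1)y_{i,2}$, because firm~$a$ again produces exactly the capacity $y_{a,1}+y_{a,2}=1$ at cost~$0$. Substituting the explicit values yields a rational function of $n$, and it then remains to let $n\to\infty$. Termwise, $\delta_1\to\frac12$ and $y_{a,1}\to\frac12$, so the market-$1$ term tends to $\frac14$; $y_2\to 1$, so the market-$2$ term tends to $\frac32$; and $(n-1)y_{i,2}\to\frac12$. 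Hence $S^\delta(y)\to\frac14+\frac32-\frac12=\frac54$, and $\gamma^S(G,\delta)=S^\delta(y)/S(x)\to\frac{5/4}{3/2}=\frac56$, matching the lower bound $\frac56$ of Theorem~\ref{thm:surplus}.

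I do not expect a genuine obstacle here: the two equilibria are already produced and verified in Proposition~\ref{lem:welfare}, so the argument reduces to substitution into the surplus formula plus an elementary limit. The only point needing a moment's care is that firm~$a$'s total output equals its capacity $1$ \emph{both} before and after the shock, so that its cost contributes $0$ to $S$ in either equilibrium; once this is noted the computation is entirely routine.
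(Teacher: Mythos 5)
Your proposal is correct and follows essentially the same route as the paper: both reuse the instance and equilibria from Proposition~\ref{lem:welfare} and evaluate the surplus on $x$ and $y$ in the limit $n\to\infty$, obtaining $S(x)=\tfrac{3}{2}$ and $S^\delta(y)\to\tfrac{5}{4}$. The only cosmetic difference is that the paper computes $S$ via the identity $S(q)=U(q)+\tfrac{1}{2}\sum_{m}r_mq_m^2$ (importing $U(y)$ from the welfare proposition) whereas you substitute directly into the definition~\eqref{eq:surplusdef}; your observation that firm $a$ produces exactly its capacity $1$ in both equilibria, so its cost vanishes, is the right point to flag.
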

\proof
Note that $S(q)=U(q)+\frac{1}{2}\sum_{m\in M}{r_m q_m^2}$. We use the instance and equilibria from Proposition \ref{lem:welfare} and find $S(x) = U(x) + \frac{1}{2}1^2 = \frac{3}{2}$ and
\begin{align}
  S(y) &= U(y)+ \frac{1}{2}(y_{a,2}+\sum_{i\neq a}y_{i,2})^2 = \frac{10 n^4+22 n^3-7 n^2-24 n+11}{8 (n^2+n-1)^2}\ \overset{{n\rightarrow\infty}}{\rightarrow}\ 5/4.\notag 
\end{align}
Combined, this approaches $\gamma^S(G,\delta)\ \overset{{n\rightarrow\infty}}{\rightarrow}\ \frac{5}{4}\frac{2}{3}=\frac{5}{6}$.

%
%
\endproof

\section{Extension to Coarse Correlated Equilibria }\label{sec:unique}
In this section, we extend our results to a broader
set of equilibrium concepts including mixed, correlated and coarse correlated
equilibria.  It is well known that mixed equilibria are a subset of correlated equilibria
which itself are a subset of coarse correlated equilibria. We derive a structural result showing that
for our model
coarse correlated equilibria are essentially unique.
While for the class of strictly concave potential games (which includes our class of games),
uniqueness of correlated equilibria is already implied 
by the result of Neyman~\cite{Neyman97}\footnote{The uniqueness result of Neyman
for correlated equilibria has later been generalized to general concave games satisfying Rosen's~\cite{Rosen65} sufficient condition
for uniqueness, see Ui~\cite{Ui08}.},
our uniqueness result extends to coarse correlated equilibria. It is worth noting that 
there exist strictly concave (even quadratic) potential games that
allow for multiple coarse correlated equilibria, see
the recent results by Moulin et al.~\cite{Moulin14,MoulinRG14}.
We first define coarse correlated equilibrium (cf.~\cite{MoulinV78,Roughgarden09}).
\begin{definition}[Coarse correlated equilibrium]
 Given a strategic game $(N,X,u)$, a probability distribution $\sigma :X\rightarrow [0,1]$ is a \emph{coarse correlated equilibrium} if for every player $i\in N$,
\[
 \E{x\sim\sigma}{u_i(x)} \geq \E{x\sim\sigma}{u_i(y_i,x_{-i})}
\]
for all pure strategies $y_i\in X_i$.
\end{definition}

\begin{theorem}\label{thm:unique}
 For a (multimarket oligopoly) game $G$, let $x$ be a random variable drawn from the game's strategy space and let $\bx=\Exp{x}$ be its expected value. Then, the distribution of $x$ is a coarse correlated equilibrium if and only if $\bx$ is a pure Nash equilibrium, $x_m\overset{\text{a.s.}}{=}\bx_m$ on every market $m\in M$ and $c_i(\bx_i)=\Exp{c_i(x_i)}$ for every firm $i\in N$. In that case also $u_i(\bx)=\Exp{u_i(x)}$, $U(\bx)=\Exp{U(x)}$, and $S(\bx)=\Exp{S(x)}$.
\end{theorem}
We prove the theorem in four lemmata. We always denote by $x$ a random variable drawn from $X$ with $\bx=\Exp{x}$.
\begin{lemma}\label{lem:priceEq}
 If the distribution of $x$ is a coarse correlated equilibrium, then 
 $\bx_m\overset{\text{a.s.}}{=}x_m$ and, thus, $p_m(\bx_m)\overset{\text{a.s.}}{=}p_m(x_m)$ on every market $m\in M$.
\end{lemma}
\begin{proof}
 For a coarse correlated equilibrium by definition,
\begin{align}
  {\Exp{u_i(x)}}&\geq {\Exp{u_i(\bx_i,x_{-i})}}\notag\\
  &={\Exp{\sum_{m\in M_i}{p_m(\bx_{i,m}+x_{-i,m})\bx_{i,m}-c_i(\bx_i)}}}\notag\\
  &={\sum_{m\in M_i}{\Exp{p_m(\bx_{i,m}+x_{-i,m})}\bx_{i,m}-c_i(\bx_i)}}\notag\\
  &={\sum_{m\in M_i}{p_m(\bx_m)\bx_{i,m}-c_i(\bx_i)}}=u_i(\bx)\label{eq8}
\end{align}
where~\eqref{eq8} holds because the price functions are affine. Then, using that the cost is convex and hence $c_i(\bx_i)\leq \Exp{c_i(x_i)}$,
\begin{align}
 \sum_{m\in M}{\Exp{p_m(x_m)x_m}}&=\sum_{i\in N}{\Exp{u_i(x)+c_i(x_i)}}\notag\\
 &\geq \sum_{i\in N}{u_i(\bx) + c_i(\bx_i)}\notag\\
 &= \sum_{m\in M}{p_m(\bx_m)\bx_m}.\label{howe}
\end{align}
However, the markets' price functions are affine, hence, $p_m(x_m)x_m$ is concave yielding
\begin{equation}
\Exp{p_m(x_m)x_m}\leq p_m(\bx_m)\bx_m.\label{however}
\end{equation}
Combining~\eqref{howe} and~\eqref{however} implies that the quantity (and thus also the price) on every market is almost surely constant, that is, $x_m\overset{\text{a.s.}}{=}\bx_m$.
\end{proof}

\begin{lemma}\label{lem:costEq}
 If the distribution of $x$ is a coarse correlated equilibrium, then 
 $c_i(\bx_i)=\Exp{c_i(x_i)}$ for every firm $i\in N$.
\end{lemma}
\begin{proof}
\begin{align}
 \Exp{u_i(\bx_i,x_{-i})}&=\sum_{m\in M_i}{\Exp{p_m(\bx_{i,m}+x_{-i,m})}\bx_{i,m}}-c_i(\bx_i)\notag\\
  &=\sum_{m\in M_i}{\Exp{p_m(x_m)x_{i,m}}}-c_i(\bx_i)\label{isZero}\\
  &=\Exp{u_i(x)}+\Exp{c_i(x_i)}-c_i(\bx_i),\notag
\end{align}
where in~\eqref{isZero} we use that $p_m(x_m)$ is almost surely constant. As the distribution of $x$ is a coarse correlated equilibrium, i.e., $\Exp{u_i(\bx_i,x_{-i})}\leq\Exp{u_i(x)}$, and the cost functions are convex, i.e., $c_i(\bx_i)\leq \Exp{c_i(x_i)}$, we obtain $c_i(\bx_i)=\Exp{c_i(x_i)}$.
\end{proof}
Note that from Lemma~\ref{lem:priceEq} and Lemma~\ref{lem:costEq} it follows that if the distribution of $x$ is a coarse correlated equilibrium, $u_i(\bx)=\Exp{u_i(x)}$, $U(\bx)=\Exp{U(x)}$, and $S(\bx)=\Exp{S(x)}$.
\begin{lemma}
 If the distribution of $x$ is a coarse correlated equilibrium, then 
$\bx$ is a pure Nash equilibrium. 
\end{lemma}
\begin{proof}
Having shown that on every market $p_m(\bx_m)\overset{\text{a.s.}}{=}p_m(x_m)$ and for every firm $c_i(\bx_i)=\Exp{c_i(x_i)}$, we find that
\begin{align}
 u_i(\bx)&=\sum_{m\in M_i}{p_m(\bx_m)\bx_{i,m}}-c_i(\bx_i)\notag\\
  &=\sum_{m\in M_i}{\Exp{p_m(x_m)x_{i,m}}}-\Exp{c_i(x_i)}\notag\\
  &=\Exp{u_i(x)}.\notag
\end{align}
For every alternative strategy $y_i\in X_i$, the fact that the distribution of $x$ is a coarse correlated equilibrium gives 
\[
 \Exp{u_i(x)}\geq \Exp{u_i(y_i,x_{-i})},
\]
and because the price functions are affine we can use the linearity of expectation to find
\[
 \Exp{u_i(y_i,x_{-i})}= \sum_{m\in M_i}{\Exp{p_m(y_{i,m}+x_{-i,m})}y_{i,m}}-c_i(y_i) = u_i(y_i,\bx_{-i}).
\]
Combining the above, $u_i(\bx)\geq u_i(y_i,\bx_{-i})$.
\end{proof}

\begin{lemma}
 Let $x$ be distributed such that $\bx=\Exp{x}$ is a pure Nash equilibrium and $x_m\overset{\text{a.s.}}{=}\bx_m$ on every market $m\in M$ and $c_i(\bx_i)=\Exp{c_i(x_i)}$ for every firm $i\in N$. Then, the distribution of $x$ is a coarse correlated equilibrium.
\end{lemma}
\begin{proof}
  As above, $ \Exp{u_i(x)} = u_i(\bx)$ and $u_i(y_i,\bx_{-i}) = \Exp{u_i(y_i,x_{-i})}$ for all players $i$ and strategies $y_i\in X_i$. Also, $u_i(\bx) \leq u_i(y_i,\bx_{-i})$ because $\bx$ is a pure Nash equilibrium. Consequently, $\Exp{u_i(x)}\leq \Exp{u_i(y_i,x_{-i})}$, that is, the distribution of $x$ is a coarse correlated equilibrium.
\end{proof}
\begin{remark}
For the special case of one market and two firms (duopoly), the uniqueness result
stated in Theorem~\ref{thm:unique} is implied by a 
result of Gerard-Varet and Moulin~\cite{Varet78} where 
\emph{locally improvable strategies via coarse correlation} are characterized 
for a class of two-player games that includes the case of 
one market, two firms, affine prices and convex costs. For general
$n$-player games, however, not much is known regarding sufficient conditions for
uniqueness of coarse correlated equilibria~\cite{Moulin:private}. 
\end{remark}
For our robust quantitative comparative statics analysis, we obtain
that all our results carry over to mixed, correlated and coarse correlated equilibria.
\begin{corollary}
Given a game $G$ and a price shock $\delta_m, m\in M$ with $\delta_m\geq 0$
for all $m\in M$.  It holds that
\begin{enumerate}
\item $\gamma^u(G,\delta)\geq 3/4$
\item $\gamma^U(G,\delta)\geq 3/4$
\item $\gamma^S(G,\delta)\geq 5/6$,
\end{enumerate}
even if in the definition of $\gamma^t(G,\delta), t\in\{u,U,S\}$ mixed, correlated or coarse correlated equilibria
are considered.
\end{corollary}
\begin{remark}
While our efficiency bounds are \emph{robust} in the sense of Roughgarden~\cite{Roughgarden09}
(translating to a broader set of equilibrium concepts), our bound in Theorem~\ref{thm:main} is
not derived based on local smoothness arguments as used in~\cite{RS11a}. In fact,
in~\cite{RS11a} examples are shown (in the context of atomic splittable congestion games)
in which coarse correlated equilibria perform strictly worse than correlated equilibria.\end{remark}

\section{Conclusions}
\citet{Bulow85} showed that for multimarket oligopolies, a positive price shock can reduce a monopolist's profit. We directed a quantitative approach at their setting to provide an understanding of the significance and robustness of this effect.
Our results -- a positive price shock can reduce a profit by 25\% and a negative price shock can increase profit by 33\% -- imply that the effect may be significant. 
For example, the possible 33\% increase in profit may be enough for a government to consider imposing a domestic sales tax in order to force domestic companies to compete more aggressively abroad. For a market participant on the other hand, our results imply that equilibrium profit is robust in the sense that no more than 25\% of current profit is lost in case of a positive price shock. We further showed
that social welfare is more robust against positive price shocks: the worst case
loss is bounded by 16,6\%.

There are several natural extensions of our model:
\begin{itemize}[noitemsep,label=-]
\item How does the bounds change assuming non-linear price functions or bounded price shocks?
\item What can be said for more complicated market structures (nested in a graph)
as considered recently by Correa et al.~\cite{CorreaFLM14,CorreaLM14}?
\item What can be said for general aggregative games with strategic substitutes as considered by 
\citet{acemoglu2013}?
\item Bulow et al. show qualitatively a similar paradoxical effect for markets with strategic complements and joint economies of scale. Here, a quantification remains open.
\end{itemize}

%
%
%

\subsection*{Acknowledgments}
We thank two anonymous referees for their helpful comments
that have improved the presentation of the paper.
We further thank Henri de Belsunce, Amos Fiat, Paul Milgrom, Rudolf M\"uller, Hans Peters, {\'E}va Tardos
and the attendees of the Dagstuhl Seminar ``Auctions and Electronic Markets'' (November 2013)
as well as the attendees of the 9th Conference on Internet and Network
Economics 2013 (December 2013) 
for their insightful comments.
This research was supported by the Deutsche Forschungsgemeinschaft within the research training group 'Methods for Discrete Structures' (GRK 1408). The research of the first author was supported by the Marie-Curie grant ``Protocol Design (nr. 327546)'' funded within FP7-PEOPLE-2012-IEF.

\bibliographystyle{abbrvnat} 
\bibliography{../../master-bib} 

\end{document}